\newlength\figureheight
\newlength\figurewidth
\newtheorem{theorem}{Theorem}%[section]
\newtheorem*{theorem*}{Theorem}
\newtheorem{proposition}[theorem]{Proposition}
\def\ps@pprintTitle{%
 \let\@oddhead\@empty
 \let\@evenhead\@empty
 \def\@oddfoot{}%
 \let\@evenfoot\@oddfoot}
\begin{document}

\begin{frontmatter}

%\title{ A study of direct and iterative null space algorithms for  hydraulic simulation of water distribution networks}
\title{ An efficient null space inexact Newton method for  hydraulic simulation of water distribution networks}

%% Group authors per affiliation:
\author[EAbAddr]{Edo~Abraham\corref{cor1}\fnref{fn1}}
 \ead{edo.abraham04@imperial.ac.uk}
 \author[EAbAddr]{Ivan~Stoianov\fnref{fn2}}
 \ead{ivan.stoianov@imperial.ac.uk}
\cortext[cor1]{Corresponding author} 
\fntext[fn1,fn2]{This work was supported by the  NEC--Imperial Smart Water Systems project. \\
Post-print extension of [17],  September 2015. Includes extended exposition, additional case studies and new simulations and analysis.} 
%\fntext[fn2]{Another author footnote, but a little more longer.} 
\address[EAbAddr]{Dept. of Civil and Environmental Engineering, Imperial College London, London, UK.}

\begin{abstract}
Null space Newton algorithms are efficient in solving the nonlinear equations arising in hydraulic analysis of water distribution networks. In this article, we propose and evaluate an inexact Newton method that relies on partial updates of the network pipes' frictional headloss computations to solve the linear systems more efficiently and with numerical reliability. The update set parameters are studied to propose appropriate values. Different null space basis generation schemes are analysed to choose  methods for sparse and well-conditioned null space bases resulting in a smaller update set. The Newton steps are computed in the null space by solving sparse, symmetric positive definite systems with sparse Cholesky factorizations. By using the constant structure of the null space system matrices, a single symbolic factorization in the Cholesky decomposition is used multiple times, reducing the computational cost of linear solves. The algorithms and analyses are validated using medium to large-scale water network models.

\end{abstract}

\begin{keyword}
Null space algorithm, inexact Newton method, partial loop flow updates, hydraulic analysis, sparse LU%, sparse QR
\end{keyword}

\end{frontmatter}

%%%Set line numbers here; set the frequency of the line numbers at top 
%\linenumbers

% !TEX root =  root.tex
%%%%%%%%%%%%%%%%%%%%%%%%%%%%%%%%%%%%%%%%%%%%%%%%%%%%%%%%%%%%%%%%%%%%%%%%%%%%%%%%
\section{Introduction}\label{sec:Introduction}
Advances in sensor, automatic control and information technologies  have  enabled the solution of increasingly challenging operational problems for smarter water distribution networks (WDNs). Reliable and efficient  tools for  modelling,  estimation, optimal control, incident/fault detection, and design optimization for large-scale hydraulic systems are vital to solving, in near real time and for progressively larger networks, challenges arising from growing water demand, ageing water infrastructure and more stringent environmental standards.
An extensive overview of operational, technical and economical  challenges facing water utilities, and a collection of current research problems can be found in~\cite{sensus2012,ccwi2013}, respectively, and the references therein. Hydraulic analysis  is essential in all these; a set of nonlinear equations governing pipe flows and nodal pressures across the network are solved to simulate the water distribution system behaviour. 
For example, optimal network rehabilitation/design problems include the optimal choice of pipes and control valves,  and their number and location under economic constraints. Previous work in literature has coupled conventional hydraulic simulation tools like EPANET~\cite{rossman2000epanet} with heuristic optimization schemes (eg. evolutionary algorithms) to solve these network design problems~\cite{Maier2014EA,savic1997genetic,nicolini2009optimal}. The same nonlinear hydraulic equations are also employed in mathematical optimization approaches for optimal network pressure control problems~\cite{Wright2014,eck2013fast}. Therefore, savings in computational time of hydraulic analysis are important to many an optimization problem for WDNs. 

 This article is concerned with  demand-driven hydraulic analysis~\cite{guidolin2013using}, which poses the flow continuity and energy conservation laws for a pipe network as a set of nonlinear equations of the flows and unknown pressure heads for given nodal demands.  The Newton method for solving nonlinear equations was exploited by~\citep{epp1970efficient} to pose  an iterative hydraulic solver, and some years later coupled with a preconditioned conjugate gradient linear solver and called Global Gradient Algorithm (GGA) in~\cite{todini1988section}.   As the size of networks modelled by water utilities become larger, various  approaches  have been proposed in recent literature  to improve computational efficiency of the GGA method. 
Some work has considered the reduction of the mathematical problem through a smaller topological representation of the original water network model; it has been standard practice for water utilities to skeletonize networks so each node abstracts an entire area or multiple  points of consumption~\cite{bhave1988calibrating,thamesWaterAtkins2013}. For example, a new method for lumping of serial nodal demands along a pipe while maintaining sufficient accuracy in the energy balance is proposed in~\cite{giustolisi2011computationally}.  In applications where multiple simultaneous simulations of networks are required, parallelizing at the  level of the analysis software using clusters of computers, multiple core CPUs, or GPUs has been shown to give promising speedups~\cite{mair2014performance,crous2012potential}. On a finer grain, parallelization of headloss computations in individual hydraulic simulation steps are employed in~\cite{guidolin2013using} to reduce computational time. Although the most significant percentage of computational time is used by the linear solver at each Newton iteration, the sequential data access by the linear algebra operations makes it less suitable for parallelism~\cite{guidolin2013using}. As the bottleneck of the Newton method for solving hydraulic equations, efficiently solving the linear systems is paramount and so is the subject of this article.

The Newton method for hydraulic analysis has a Jacobian with a saddle point structure~\cite{abrahamnull2014,benzi2005numerical}. In the numerical optimization literature, null space algorithms for saddle point problems have been used extensively, often called {\it reduced Hessian} methods~\cite{benzi2005numerical}. Null space algorithms, as opposed to the range space approach of GGA~\cite{todini2012unified}, have also been applied for hydraulic analysis of water and gas pipe networks~\cite{nielsen1989methods,rahal1995cotree,elhay2014reformulated,abrahamnull2014}. For a WDN with $n_p$ number of pipes (or links) and $n_n$ unknown-head nodes, the number $n_l=n_p-n_n$, which is the number of co-tree flows~\cite{elhay2014reformulated}, is often much smaller than $n_n$.  At each iteration, whereas the GGA method solves a linear problem of size $n_n$, a null space method solves an often much smaller problem of size $n_l$ but with the same symmetric positive definiteness properties.  Therefore,  significant computational savings can be made for sparse network models.
Moreover, GGA becomes singular when one or more of the head losses vanish. Unlike the GGA approach, null space algorithms do not involve inversion of headloss values. As  such, they will not require processes to deal with zero flows so long as there are no loops with all zero flows~\cite{abrahamnull2014,benzi2005numerical,elhay2014reformulated}.

%The references~\cite{nielsen1989methods,rahal1995cotree,elhay2014reformulated} differ in the way they generate the null space bases. The use of matrix row and column permutations to compute null space bases is proposed in~\citep{nielsen1989methods}, whereas~\citep{rahal1995cotree} uses graph theoretic tools to solve what he called the co-tree formulation of the problem.  In addition to showing the equivalence of the two methods in~\cite{nielsen1989methods} and~\cite{rahal1995cotree},~\citep{elhay2014reformulated} propose a  simple matrix reduction based approach for null basis generation in their reformulation, which they call reformulated co-tree flows method (RTCM). In addition to mathematically showing that GGA and null space methods always take the same number of Newton iterations,~\citep{elhay2014reformulated} also demonstrate, through case studies, the superiority of RCTM in both memory requirements and computational speed compared to GGA. 
%

In this article, which is an extended exposition of the post-print from~\cite{abrahamnull2014}, we investigate further the null space Newton algorithms for hydraulic analysis proposed in~\cite{abrahamnull2014}.
% From simulations in~\cite{abrahamnull2014}, at each Newton step,   more than 60\% of  computational time is spent in computing head losses, matrix-matrix multiplications and linear solves.
 By using sparse null space basis, we show that  a significant fraction of the network pipes need not be involved in the flow updates of the null space Newton method. In addition to these, we take advantage of the loop flows that converge fast to propose a partial update scheme that reduces the number of computations in calculating head losses and matrix-matrix multiplications. By formulating the partial updates as an inexact Newton method, the method guarantees nice convergence properties. We also study the Newton tolerance and partial update set parameters to suggest appropriate parameter values.    
%It is also noted that our algorithm allows for and can benefit from all the software level and small grain parallelizations discussed in~\cite{mair2014performance,crous2012potential,guidolin2013using}. 
Since the flow update equations of the null space algorithm do not depend on pressure evaluations, a heuristic for reducing the number of pressure head computations is utilised for further computational savings. We demonstrate through case studies that, for sparse network models,  the proposed null space solvers can reduce CPU time by up to a factor of 4 compared to GGA. %Modify  this sentence with reference to specific examples sentence at the end

We first present a step-by-step derivation of the null space algorithm from the hydraulic equations, and then  discuss various computational tools for generating sparse null bases and sparse factorizations. State of the art solvers from the SuiteSparse library~\cite{davis2004umfpack,davis2011spqr} are used. 
%It is also the case that the most expensive tasks of the scheme need not be recomputed more than once. 
In our implementation, the values that stay constant over different steady state simulations are computed only once. In addition to the Hazen-William pipe resistance computations, the matrix whose columns span the null space of the network topology and the Cholesky factors for the head equations are two more examples.  Within each hydraulic simulation, values that do not change at each iteration are also solved for only once. We show  the iterative Darcy-Weisbach and rational exponent   Hazen-William head losses do not need to be recomputed for pipes not involved in the loop equations of the null space algorithm. 
 
The remainder of this article is organised as follows. In the next section, we will discuss the hydraulic analysis problem and traditional solution methods. 
%We  refer to proofs that show the convergence properties of the Newton method for hydraulic analysis.  
Section~\ref{sec:nullspacealgs} examines the structure of the Newton linear systems and then discusses relevant null space algorithms. Sparse null basis computation tools are also discussed and implemented. In Section~\ref{sec:parUpdate}, novel methods for reducing the computational cost of the null space algorithm are presented. The use of partial update sets and a related  new  method  for reducing head computations are described. Mathematical proofs are presented to show the Newton method stays convergent with the introduced modifications. Finally, a  numerical study with further results is presented using a number of operational and modified network examples detailed in Section~\ref{sec:caseStudy}, followed by our conclusions in Section~\ref{sec:conclusion}. 

{\bf Notation: } For a vector $v\in \mathbb{R}^n$, we define the usual p-norms as  $\lVert v \rVert_p:=(\sum_{i=1}^{n}{|v_i|^p})^{1/p}, p=1,2$ and ${\|v\|_p= \max\limits_{i} |v_i|}$ if $p=\infty .$  
For a matrix $A$, $||A||_p=\max\limits_{||x||=1}\frac{||Ax||_p}{||x||_p}$, where $||Ax||_p$, $||x||_p$ are the corresponding vector norms.
$A^T$ denotes the transpose of the matrix $A.$ For an invertible matrix $X$, we denote its condition number  by $\kappa(X)_p:=\|X\|_p\|X^{-1}\|_p.$ The (right) null space of a matrix $A$ is also denoted by $\text{ker}(A).$

%\subsection{A sketch of the story}
%
%
%
%\begin{itemize}   
%
%
% \item  The best way to solve these linear systems is a null space methods
%
%saddle point problems (Elhay(2014) and related, and others 
%
%
%\item Edo Abraham (2014, submitted) has made some advances. 
%
%\item As opposed to Schur methods, a null space algorithm has this properties -- all non-loop flows converge quick. Loop flows take longer  --  this is desirable because we can do partial update sets...plots of convergence
%
%
%\item Propositions
%Proofs referenced
%
%\item The null space algorithm is like Hardy method-- sol
%\end{itemize}
%
%
%
%
%
%
%
%
%
%

% 
% 
% Contributions of this article include
% \begin{itemize}
%  \item The use of updates for a superior null space algorithm
%   \item The formalization of the proposed updates as an inexact Newton method
%  \item The use of  sparse LU  for a null space algorithm
%  \item Proposing a minimal algorithm with single factorizations
%  \item analysis  using a large scale network (modified real case example)
%   \item The formalization of the regularization in~\cite{elhay2011dealing} as an inexact Newton method
% 
% \end{itemize}

%%% Local Variables: 
%%% mode: latex
%%% TeX-master: "main"
%%% End: 

% !TEX root =  root.tex
%%%%%%%

\section{Flow continuity and energy conservation equations: solution via the Newton method}

%\section{Nonlinear Equation Solver for Hydraulic Analysis}\label{sec:wdn_anal_prob}
%\subsection{Flow Continuity and Energy Conservation Equations: Solution via Newton Iterations }

In this article, we deal with demand-driven hydraulic analysis, where the demand is assumed known. In contrast, pressure-driven demand and leakage simulations represent demands as nonlinear functions of pressure~\cite{giustolisi2008pressure} to be solved for. For a  network with $n_p$  links connecting ${n_n (<n_p)}$ unknown head junctions, and $n_0$ known head junctions, we define the vector of unknown flows and pressure heads as ${q=[q_1,\ldots,q_{n_p}]^T}$ and ${h=[h_1,\ldots,h_{n_n}]^T},$ respectively. Let pipe $p_j$ have flow $q_j$ going from node $i$ to node $k$, and with pressure heads $h_i$ and $h_k$ at nodes $i$ and $k$, respectively. The frictional headloss (or flow resistance)  across the pipe can then be represented as:
\begin{equation}\label{eq:pipeloss}
  h_i - h_k = r_j|q_j|^{n-1}q_j,
\end{equation}
where $r_j$, the resistance coefficient of the pipe, can be modelled as either independent of the flow or implicitly dependent on flow $q_j$ and given as $r_j=\alpha L_j/(C_j^nD_j^m)$. The variables $L_j,$ $D_j$ and $C_j$ denote the length, diameter and roughness  coefficient of pipe $j$, respectively.  The triplet $\alpha,$ $n$ and $m$ depend on the energy loss model used; Hazen-Williams (HW: $r_j=10.670 L_j/(C_j^{1.852} D_j^{4.871})$) and Darcy-Weisbach (DW) are two commonly used frictional head loss formulae~\cite{elhay2011dealing}. In DW models, the dependence of the  resistance coefficient on flow is implicit; see the formulae in~\cite[(1)--(2)]{simpson2010jacobian}.
With head loss equations defined for each pipe and the fixed heads and demands for each node taken into account, the steady-state fluid flows in a water network must satisfy the two hydraulic principles:
\begin{align}
A_{12}^T q  -d=0,\label{eq:flowcont}\\
A_{11}(q)q + A_{12}h+A_{10} h_0=0,\label{eq:energyconsrv}
\end{align}
where the variables $h_0\in\mathbb{R}^{n_0}$ and $d\in\mathbb{R}^{n_n}$ represent the known heads (eg.\ at a reservoir or tank) and demands at nodes, respectively. While~\eqref{eq:flowcont} guarantees the conservation of flow at each junction node,~\eqref{eq:energyconsrv} accounts for the frictional head loss across all links. Here, the matrices $A_{12}^T\in\mathbb{R}^{n_n\times n_p}$ and $A_{10}^T\in\mathbb{R}^{n_n\times n_0}$ are the node-to-edge incidence matrices for the  $n_n$ unknown head nodes and $n_0$ fixed head nodes, respectively. For example, each link is associated with an $n_n\times 1$ row vector in $A_{12}$: $A_{12}(j,i)=1 (\text{ or }-1)$ if link $j$ enters ( or leaves) node $i$ and $A_{12}(j,i)=0$ otherwise.
The square  matrix $A_{11}\in\mathbb{R}^{n_p\times n_p}$ is a diagonal matrix with the elements 
\begin{equation}\label{eq:loss_diag}
A_{11}(j,j)=r_j|q_j|^{n_j-1},j=1,\ldots,n_p,
\end{equation}
representing part of the loss formula in~\eqref{eq:pipeloss}. 
The set of nonlinear equations~\eqref{eq:flowcont} and~\eqref{eq:energyconsrv} can be represented by the matrix equation~\cite[Eq.\ (1)]{todini1988section}: 
%%%%%($r_j=10.670 L_j/(C_j^{1.852} D_j^{4.871})$) 
\begin{equation}
\label{eq:hydro_nl_eqn}
f(q,h) := \begin{pmatrix}
A_{11}(q) & A_{12}\\
A_{12}^T & 0
\end{pmatrix} 
\begin{pmatrix}
q\\
h
\end{pmatrix} +
\begin{pmatrix}
A_{10} h_0\\
-d
\end{pmatrix} =0.
\end{equation}
%where $q$ and  $h$ are the unknown flow rates in each pipe and the unknown node heads, respectively. 
% In demand-driven hydraulic analysis, the demand is assumed known, as opposed to pressure-driven simulations where demands are written as functions of pressure~\cite{giustolisi2008pressure} to be solved for. Once a  WDN is defined by its connectivity, and the characteristic of its pipes and the demands at each node, a steady-state solution of the system is computed  by solving the flow conservation and energy loss equations for a given demand. The objective is to compute the unknown flows in each pipe and the pressures at the demand nodes. 

Most  non-linear equations  and unconstrained optimization problems are solved using Newton's method~\cite{nocedal2006numerical,dennis1996numerical}. The same Newton method has been applied to solve  hydraulic analysis problems, as early as in~\cite{epp1970efficient}, and has been extensively used for the same purpose since then. 
By considering  the Jacobian of $f(q,h)$ with respect to the unknown $x := [q^T \;\; h^T]^T$, and using the head loss model in~\eqref{eq:loss_diag}, the Newton iteration for the solution of~\eqref{eq:hydro_nl_eqn} is~\cite{abrahamnull2014}:
%%%%
\begin{equation}\label{eq:nr_lin_eq}
\begin{aligned}
\nabla f(x^k) (x^{k+1}-x^k) &=-  f(x^k)   \\
\begin{bmatrix}
N A_{11}(q^k) & A_{12}\\
A_{12}^T & 0
\end{bmatrix} 
\begin{bmatrix}
dq\\
dh
\end{bmatrix} &= -
 \begin{bmatrix}
A_{11}(q^k) & A_{12}\\
A_{12}^T & 0
\end{bmatrix} \begin{bmatrix}
q^k\\
h^k
\end{bmatrix} \\&+
\begin{bmatrix}
-A_{10} h_0\\
d
\end{bmatrix} 
\end{aligned}
\end{equation}
where $\begin{bmatrix}dq\\ dh \end{bmatrix}=\begin{bmatrix}q^{k+1}-q^k\\ h^{k+1} - h^k \end{bmatrix}$ and $N=\texttt{diag}(n_i), \; i=1,\ldots,n_p.$ %$n_p$ is the number of links and $n_i$ and $r_i$ are the rational exponent and roughness, respectively, for the head loss formula of the $i^\text{th}$ link. 

In~\eqref{eq:hydro_nl_eqn}, popularly called the Global Algorithm formulation~\cite{todini2012unified,todini1988section}, the frictional headloss function is expressed as a function of the flows.  Using a nonlinear transformation of the headloss in pipe $p_j$,~\eqref{eq:pipeloss} can be reformulated to:
\begin{equation}\label{eq:pipelossheads}
 q_j =  r_j^{-1/n}|\Delta h_j|^{\frac{1-n}{n}}\Delta h_j, \quad  \Delta h_j= h_i - h_k,
\end{equation}
where pipe $p_j$ is topologically represented as going from node $i$ to node $k.$ A matrix form of~\eqref{eq:pipelossheads} is
\begin{equation}\label{eq:pipelossheadsMat}
 q =  \hat{A}_{11} (A_{12} h+A_{10} h_0), 
\end{equation}
where $\hat{A}_{11}(j,j)=r_j^{-1/n}|\Delta h_j|^{\frac{1-n}{n}},j=1,\ldots,n_p.$ Substituting~\eqref{eq:pipelossheadsMat} in the continuity equations~\eqref{eq:flowcont}, what is called the `nodal head representation'~\cite{todini2012unified} of the hydraulic equations is projected to the size $n_n$ nonlinear equations:
\begin{equation}\label{eq:hydro_nl_NHeqn}
 A_{12}^T\hat{A}_{11} (A_{12} h+A_{10} h_0)-d=0, 
\end{equation}

Although the smaller number of nonlinear equations~\eqref{eq:hydro_nl_NHeqn}, in the unknowns $h,$ can be solved using Newton's method, it has been shown via case studies~\cite{epp1970efficient,todini2012unified} that the Newton iterations on the new nonlinearities (i.e.\ formulating the flows in terms of  energy heads only) take many more iterations than when Newton's  method is applied to~\eqref{eq:hydro_nl_eqn}. In addition to more Newton iterations, this nodal formulation does not result in linear systems with saddle point structure, which would allow for the use of faster and better conditioned null space methods~\cite{abrahamnull2014,elhay2014reformulated}. 
   Here, we first present the Newton method for solving~\eqref{eq:hydro_nl_eqn} by stating its convergence properties. The following proposition is used to guarantee convergence properties of a partial update null space method by posing it as inexact Newton method.

{\begin{proposition}(Convergence of Newton method for hydraulic analysis)

Let $x^* := [q^*\;\; h^*]^T\in D$, with open convex set $D$, be a non-degenerate solution of~\eqref{eq:hydro_nl_eqn}, i.e.\ the Jacobian $\nabla f(x^*)$ is not singular, and let $\{x^k\}$ be the sequence of states generated by the Newton iteration~\eqref{eq:nr_lin_eq}. For $x^k\in D$ sufficiently near $x^*$, the Newton sequence exists (i.e.\ $\nabla f(x^i)$ is nonsingular for all $i>k$) and has local superlinear convergence.
\label{prop:ENM}
\end{proposition}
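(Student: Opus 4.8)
The plan is to read Proposition~\ref{prop:ENM} as the classical local convergence theorem for Newton's method applied to the map $f$ of~\eqref{eq:hydro_nl_eqn}, and to reduce it to the standard theory in~\cite{dennis1996numerical,nocedal2006numerical} by checking its two hypotheses: that $f$ is continuously differentiable in a neighbourhood of $x^*$ with Jacobian equal to the coefficient matrix displayed in~\eqref{eq:nr_lin_eq}, and that this Jacobian is nonsingular at $x^*$. The latter is exactly the non-degeneracy assumption, so the real work lies in the former.

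First I would verify the Jacobian and its continuity. Writing the $j$-th entry of the upper block of $f$ as $\phi_j(q_j):=r_j|q_j|^{n_j-1}q_j$ plus the linear terms in $h$ and $h_0$, the only nontrivial derivatives are the flow derivatives of these diagonal terms, $\phi_j'(q_j)=n_j r_j|q_j|^{n_j-1}=n_j A_{11}(j,j)$; assembling these gives the block $N A_{11}(q)$ and confirms that $\nabla f(x)$ coincides with the matrix in~\eqref{eq:nr_lin_eq}, the remaining blocks $A_{12}$ and $A_{12}^T$ being constant. The delicate point is continuity of $\nabla f$ at states with some $q_j=0$. Since every exponent obeys $n_j>1$, the scalar map $t\mapsto|t|^{n_j-1}t$ is of class $C^1$ on all of $\mathbb{R}$, with derivative $n_j|t|^{n_j-1}\to 0$ as $t\to 0$; hence $\nabla f$ is continuous on the whole state space, and in particular on the open convex set $D$.

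Then I would run the standard local-convergence argument. By nonsingularity of $\nabla f(x^*)$ and continuity of matrix inversion (the Banach perturbation lemma), there is a radius $\rho>0$ with $\overline{B}(x^*,\rho)\subseteq D$ on which $\nabla f(x)$ remains invertible and $\|\nabla f(x)^{-1}\|$ is uniformly bounded; this makes each step of~\eqref{eq:nr_lin_eq} well defined and yields the existence claim. For the rate, subtracting $x^*$ from the Newton update and using $f(x^*)=0$ gives $\|x^{k+1}-x^*\|\le\|\nabla f(x^k)^{-1}\|\,\|f(x^k)-f(x^*)-\nabla f(x^k)(x^k-x^*)\|$, and the linearisation error is controlled through an integral mean-value estimate by $\sup_{t\in[0,1]}\|\nabla f(x^*+t(x^k-x^*))-\nabla f(x^k)\|\,\|x^k-x^*\|$. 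Continuity of $\nabla f$ forces this supremum to vanish as $x^k\to x^*$, so $\|x^{k+1}-x^*\|\le\varepsilon_k\|x^k-x^*\|$ with $\varepsilon_k\to 0$; once $\varepsilon_k<1$ this is a contraction that drives the whole sequence to $x^*$, and $\varepsilon_k\to 0$ is precisely superlinear convergence.

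The main obstacle is not the convergence machinery, which is routine, but the non-smoothness of the head loss at zero flow: for exponents $1<n_j<2$ (e.g.\ Hazen--Williams with $n_j=1.852$) the derivative $n_j|t|^{n_j-1}$ is only H\"older, not Lipschitz, near $t=0$, so the sharper Lipschitz estimate that would give quadratic convergence is unavailable and only the superlinear rate can be claimed in general. I would close by remarking that if $q_j^*\neq 0$ for all pipes carrying the relevant flows (or under the Darcy--Weisbach exponent $n_j=2$), $\nabla f$ is locally Lipschitz and the argument upgrades to quadratic convergence, which explains why the statement is phrased in terms of the weaker superlinear rate.
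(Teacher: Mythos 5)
Your proof is correct and follows essentially the same route as the paper, which simply cites \cite{abrahamnull2014} for the continuous differentiability of $f$ and then invokes the standard local convergence theorem \cite[Thm.~11.2]{nocedal2006numerical}; you have merely filled in the details the paper delegates to those references, namely the $C^1$ verification of the head-loss terms at zero flow and the reason the rate is stated as superlinear rather than quadratic. The only caveat is that your explicit derivative computation treats $r_j$ as flow-independent, which is exact for Hazen--Williams but not for Darcy--Weisbach, where $r_j$ depends implicitly on $q_j$ through the friction factor; the paper covers that case by appealing to the smooth Darcy--Weisbach formulae discussed in \cite{abrahamnull2014} and \cite{simpson2010jacobian}, so your argument needs the same appeal there.
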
}

\begin{proof}(\cite[Lemma 1]{abrahamnull2014})
 As we show in~\cite{abrahamnull2014},  $f(\cdot)$ is continuously differentiable in $\mathbb{R}^{n_p+n_n}$ for both Darcy-Weisbach and Hazen-Williams models. If we assume $x^*$ is non-degenerate, the proof is a standard result and is relegated to~\cite[Thm.\ 11.2]{nocedal2006numerical}.
\end{proof} 
The Newton method is often preferred as a nonlinear equation solver because of its asymptotic quadratic convergence. We show in~\cite{abrahamnull2014} that the Jacobians of $f(\cdot)$ in~\eqref{eq:hydro_nl_eqn} are Lipschitz either when the appropriate Darcy-Weisbach equation~\cite[(1--2)]{simpson2010jacobian} is used or when regularised Jacobians are used for zero flows and small laminar flows in Hazen-Williams models~\cite{gorev2012method}. With the use of such models to cope with zero flows, the Newton algorithm will have local quadratic convergence by~\cite[Thm.\ 11.2]{nocedal2006numerical}.

%%%%%%%%%%%%%%%%%%%%%%%%%%%%%%%%%%%%%%%%%%----
Almost all of the computational cost of the Newton method is incurred in the repeated solving of the linear system~\eqref{eq:nr_lin_eq} to find the Newton step. This linear system is, however, very sparse and has a special structure. Therefore, the rest of this article concerns  the structure of~\eqref{eq:nr_lin_eq} and proposes novel and efficient solvers based on linear transformations of this matrix  and tailored to the peculiarities of the hydraulic nonlinearities concerned.

% !TEX root =  root.tex
%%%%%%%
\section{Null-space algorithms for hydraulic analysis}\label{sec:nullspacealgs}
\subsection{Problem formulation and algorithm derivations }
An interesting property of the Newton iteration linear equations~\eqref{eq:nr_lin_eq} is that they have what is called a {\itshape saddle point structure}~\cite{benzi2005numerical}; if the $2\times 2$ block structure is considered, the $A_{11}$ block is symmetric positive definite or semidefinite, $A_{21}=A_{12}^T\in \mathbb{R}^{n_n\times n_p},\; n_p\geq n_n,$ and $A_{22}=0$. The same class of problems arise in many PDE constrained optimization problems with various boundary conditions~\cite{pearson2012regularization}. Due to the indefiniteness  and often poor conditioning of the matrix, saddle point systems are challenging to solve efficiently and accurately. 
When the assumption that $A_{11}$ is invertible is valid, considering the block partitions of~\eqref{eq:nr_lin_eq} and applying block substitutions (a  Schur complement reduction~\cite[Sec.\ 5]{benzi2005numerical}), we  derive  an equivalent linear system with a smaller number of primary unknowns:
\begin{equation}
\begin{aligned}
A_{12}^T(NA_{11}^k)^{-1} A_{12}  h^{k+1} & =  -A_{12}^TN^{-1}(q^k + (A_{11}^k)^{-1}A_{10}h_0) -\\&(d-A_{12}^Tq^k)\label{eq:nr_lin_eq_1a}%\\
%q^{k+1}& = (I-N^{-1})q^{k}-(NA_{11}^k)^{-1} ( A_{12} h^{k+1}+A_{10}h_0)\label{eq:nr_lin_eq_1b}.
\end{aligned}
\end{equation}
\begin{equation}
 q^{k+1} = (I-N^{-1})q^{k}-(NA_{11}^k)^{-1} ( A_{12} h^{k+1}+A_{10}h_0)\label{eq:nr_lin_eq_1b}.
\end{equation}
%%% this form has similar linear systems as in the Newton method for the nodal head formulations
It is fortuitous here that, for invertible $A_{11}^k$, this  Schur complement reduction involves only simple element-wise inversions of the diagonal matrices $A_{11}^k$ and $N$ and the linear system stays sparse; for a general saddle point system, the Schur inversion can cause excessive fill-in even when the $A_{11}$ and $A_{12}$ blocks are sparse. 
Therefore, given an initial guess $(q^k, h^k)$, solving~\eqref{eq:nr_lin_eq} can be accomplished by first solving~\eqref{eq:nr_lin_eq_1a} for the pressure heads and  the flows  $q^{k+1}$ are then computed  by substituting for $h^{k+1}$ in~\eqref{eq:nr_lin_eq_1b}. In~\cite{todini1988section}, this reformulation of~\eqref{eq:nr_lin_eq} is called  `the nodal gradient formulation' of  GGA; this, simply because a linear system of equations is now solved only for the node heads in~\eqref{eq:nr_lin_eq_1a}.  Since the GGA method uses the Schur complement reduction of the larger saddle point matrix in~\eqref{eq:nr_lin_eq},   we also call GGA a Schur method/algorithm from here on, a name often used in the numerical analysis and optimization literature~\cite{benzi2005numerical}. 

One limitation of the Schur approach is the requirement that the $A_{11}$ block be nonsingular. When zero or very small flows occur due to either closed valves or zero demand in parts of the network,
 $A_{11}$ would be singular for head loss equations modelled by the Hazen-Williams formula. For the closed valve cases, by  expressing the headloss across  them   by a new variable and explicitly enforcing  a zero flow through them has been used to avoid singularities in~\cite{Boulos1993explicit}. However, in large operational networks  zero flows often arise due to zero demands at different time periods
and  as a result of  action by pressure and flow control devices~\cite{gorev2012method}.
In such cases, it is not known a priori which flows are zero; Figure~\ref{fig:Flowhist} shows a histogram of flows in the network BWKWnet  at peak hour, where some 5\% flows are shown to be zero and none of them  due to closed pipes;  see Section~\ref{sec:caseStudy} for details on case study networks. Therefore, other ways to deal with zero flows are needed.
 \begin{figure}%
 \centering
    \includegraphics[height=0.25\textwidth]{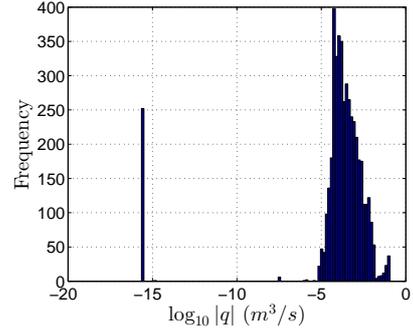} %
    \caption{A histogram showing the distribution of pipe flows for BWKnet network at 8:15 am. Here, the zero flows are set to machine precision ($\texttt{eps}, \approx 2e^{-16}$ on the CPU used) for plotting purposes here.}%
    \label{fig:Flowhist}%
\end{figure} 

Unlike in the Schur complement reduction, there is no requirement for $A_{11}$ to be nonsingular in a null space reformulation.
Assuming that $A_{12}$ has full column rank, which is shown to be true for WDN model in~\cite{elhay2014reformulated}, and ${\text{ker}(A_{11}) \cap \text{ker}(A_{12}^T)=\{0\}}$, a much smaller problem  can be solved at each iteration using null-space methods.
Let the columns of a non-zero matrix $Z\in \mathbb{R}^{n_p\times n_l}$, $n_l=n_p-n_n,$ span the null space of $A_{12}^T$, i.e.\ $A_{12}^TZ=0$, we can decompose $q^{k+1}$ in~\eqref{eq:nr_lin_eq} as:
\begin{equation}
\label{eq:nullspace_sum}
q^{k+1}=x^*+Zv^k,
\end{equation}
where $x^*$ is one of an infinite number of solutions for $A_{12}^Tx=d$ (eg.\ a least-squares solution for $d\neq 0$ would suffice) and $v^k\in \mathbb{R}^{n_l}$ is unknown. 
Substituting for $q^{k+1}$ in the first block row  of~\eqref{eq:nr_lin_eq} and pre-multiplying by $Z^T$ results in the smaller linear system
\begin{equation}
\begin{aligned}
%F^{k}(x^*+Zv)+A_{12}h^{k+1} &= (F^{k}-G^{k})q^k-A_{10}h_0, \\
Z^TF^{k}Z \; v^k&=Z^T[(F^k-G^k)q^k-A_{10}h_0-F^k x^*],\label{eq:nullspace_lineq}
\end{aligned}
\end{equation}
where $F^k=NA_{11}^k$ and $G^k=A_{11}^k.$

The heads are then calculated by solving
\begin{align}
A_{12}^TA_{12} \; h^{k+1} &= A_{12}^T\{(F^{k}-G^{k})q^k-A_{10}h_0-F^kq^{k+1}\}.
\label{eq:nullspace_lineq_a}
\end{align}
A null space algorithm based Newton method first solves for $x^*$ such that $A_{12}^Tx^*=d$, and
then iteratively solves~\eqref{eq:nullspace_lineq} and~\eqref{eq:nullspace_lineq_a} in sequence until convergence is achieved.  Of course,  \eqref{eq:nullspace_lineq_a} need only be solved when the iterates are near convergence because  the flow equations~\eqref{eq:nullspace_lineq} do not depend on the pressure heads; see Subsection~\ref{subsec:stopcriteria} for a discussion on convergence criteria.
The null space method has the following computationally advantageous  properties:
\begin{itemize}
 \item Where the null space dimension $n_l$ is small, the linear system in~\eqref{eq:nullspace_lineq} is  smaller than the Schur method equations~\eqref{eq:nr_lin_eq_1a}. Since $F^k$ is diagonal, the null space problem will be sparse if $Z$ is sparse. As will be shown in Table~\ref{table:Zproperties},  with an appropriate choice of $Z$, the number of non-zeros in  $Z^TF^{k}Z$ is much less than the number of non-zeros in $A_{12}^TF^{k}A_{12}$ for most WDN models.
 \item The matrices $Z^TF^{k}Z$ can be shown to be symmetric positive definite (SPD). Even when $F^k$ is singular, the condition ${\ker(F^k)\cap \ker{A_{12}^T}=\{0\}}$ is sufficient to show positive definiteness. 
  \item The matrix coefficient of~\eqref{eq:nullspace_lineq_a}, $A_{12}^TA_{12}$, is  similarly SPD -- see the appendix of~\cite{elhay2014reformulated} for proof that $A_{12}$ has full rank, and positive definiteness follows. Since this matrix depends only on the network topology and  does not change with Newton iterations or demand, a single numeric factorization can be used for multiple hydraulic analyses. 
  
  \item In extended time simulations, we need to solve for different $x^*$ as the demands $d$ vary. Now, since $x^*$ is in the range space of $A_{12}$,  let $x^*=A_{12}w,\; w\in\mathbb{R}^{n_n}$ and substituting for $x$ we get:
  \begin{equation}
   \begin{aligned}
  A_{12}^TA_{12}w&=d,\label{eq:nullspace_xstar}
\end{aligned}
  \end{equation}
Therefore, the same single factorization of the SPD system~\eqref{eq:nullspace_lineq_a} can be used to solve for $w$ by forward and back substitutions and $x^*\gets A_{12}w$).
  
 \item  Similarly, the matrix $Z$ is computed only once for multiple hydraulic simulations.  

\end{itemize}

%Firstly, where the null space dimension $n_l$ is small, the linear system in~\eqref{eq:nullspace_lineq} is much smaller than the Schur method equations~\eqref{eq:nr_rlin_eq_1a}. Moreover, the matrices $Z^TF^{k}Z$ can be shown to be SPD. Even when $F^k$ is singular, the condition ${\ker(F^k)\cap \ker{A_{12}^T}=\{0\}}$ is sufficient to show positive definiteness. The matrix coefficient of~\eqref{eq:nullspace_lineq_a}, $A_{12}^TA_{12}$, is  similarly SPD -- see the appendix of~\cite{elhay2014reformulated} for proof that $A_{12}$ has full rank, and positive definiteness follows. It is also the case that $A_{12}^TA_{12}$ does not change with Newton iteration -- it is a constant for a given network topology. 

For (sparse) linear solvers, since the matrix factorization stage is the most computationally demanding stage~\cite[Appx.\ C]{boyd2004convex}, the reuse of a single factorization for~\eqref{eq:nullspace_lineq_a} results in large computational savings.   It is also desirable that the condition number of $Z$ be low since the condition number of $Z^TF^{k}Z$ %in~\eqref{eq:nullspace_lineq} 
is bounded by its square.  Depending on the the method of choice for computing $Z$, a number of null space methods can be adopted; Algorithm~\ref{alg:null_full} shows the null space  Newton method tailored to demand-driven hydraulic analysis. 

%%%%%%New null space algorithm
\newcommand{\algrule}[1][.2pt]{\par\vskip.5\baselineskip\hrule height #1\par\vskip.5\baselineskip}
\begin{algorithm}[t]
{\bf Preprocessing for extended time simulations:} Compute all constants
\begin{enumerate}[label=(\roman{*})]
\item Compute null-space basis $Z$
\item  Factorize $A_{12}^T A_{12}$ (i.e.\ compute $L$ such that $LL^T=A_{12}^T A_{12}$)
\end{enumerate}	
{\bf Preprocessing for a given demand $d$:} 
\begin{enumerate}[label=(a)]
\item {Solve for $x^*$ from~\eqref{eq:nullspace_xstar}: $LL^Tw=d,\quad x^*\gets A_{12}w$}
\end{enumerate}	

{\bf Input: $\delta_N$, $k_{max}$, ($x^*$, $L$, $Z$) , $q^0,h^0$}

{\bf Algorithm:}
     \algrule% put HORIZONTAL LINE HERE
\begin{algorithmic}[1]
\STATE set $k=0$, and compute $G^0,\; F^0,\;\|f(q^0,h^0)\|_\infty$
%\FOR{$i=1$ to \texttt{VeryBigNumber}}
%\FOR{$j=2$ to \texttt{NumOfNewtonIterations}}
\WHILE{$\|f(q^k,h^k)\|_\infty>\delta_N$ AND $k\leq k_{max}$ } 
\STATE $ F^k=Regularize(F^k)$ 
\STATE $\underbrace{Z^TF^{k}Z}_{X^{k}}=\sum\limits_{i=1}^{n_p}{f_i^{k} z_i z_i^T}$ \label{alg:lineZFZ}\\
\STATE Solve $ X^k v^k=b^k$
\STATE $ q^{k+1}=x+Zv^k$
\STATE Recompute $G^k, \; F^k$ $\quad$ \label{alg:lineHlossUpdate}
\STATE Solve $ LL^T h^{k+1}=b(q^{k+1}) $  \label{alg:lineLL}
\STATE Set $k$ to $ k+1$
\STATE Compute the Residual error $\|f(q^k,h^k)\|_\infty$ \label{alg:lineError}
\ENDWHILE
%\ENDFOR
%\ENDFOR
\end{algorithmic}
\caption{Exact Newton method with null space algorithm}
\label{alg:null_full}
\end{algorithm}
%%%%%%%%

In~\cite{Boulos1993explicit}, their explicit loop method is shown to remain well posed (i.e.\ to have a unique solution) provided no loop contains all zero flows. 
For null space methods, it has been shown in~\cite{nielsen1989methods} that the problem stays well posed as long as none of the loops have zero flows in all pipes. Therefore, compared to a Schur method, a null space algorithm is more robust in dealing with the zero flow problem~\cite{elhay2014reformulated,abrahamnull2014}. However, it is quite usual to have badly conditioned hydraulic analysis problems when large scale operational networks are considered. For example, Figure~\ref{fig:FRhist} shows the distribution of the frictional loss coefficients for network BWKWnet and the elements of the diagonal matrix $G^k:=A_{11}^k$, corresponding to the  pipe flow solutions (from a null space algorithm) in Figure~\ref{fig:Flowhist}. The ratio of the largest to the smallest friction factors is of order $10^8$. When coupled with a large range for pipe flows, this results in even larger condition numbers for $G^k.$
%%%%%%%%%%%%%%%%%%%%%%%%%%%%%%%%
%%%%%%%Figure
\begin{figure}[h!]%
 \centering
     \subfloat[$r$ (flow resistance coefficients)]{ \includegraphics[height=0.25\textwidth]{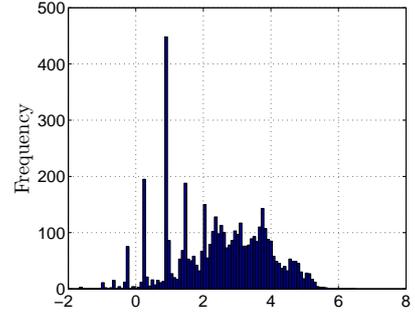}} \\%
     \subfloat[$\texttt{diag}(A_{11})$]{ \includegraphics[height=0.25\textwidth]{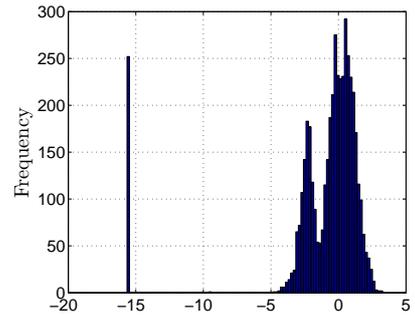}} %
    \caption{A histogram showing the distribution of frictional loss coefficients for the pipes and associated elements in the diagonal matrix $A_{11}$ in~\eqref{eq:nr_lin_eq} for the network BWKnet.}%
    \label{fig:FRhist}%
\end{figure} 
%%%%%%	%%%%end Figures
To avoid numerical ill conditioning and possible positive semidefiniteness of the GGA linear systems due to singular diagonal elements of $A_{11}$, zero and very small flows were replaced by arbitrary small positive number $\delta$ in~\cite{todini1988section}; zero flow cases are never allowed for in any link. However, as can be seen in~Figure~\ref{fig:FRhist}, even when zero flows are replaced by a small constant (for example, here we found that \ $\delta=10^{-6}$ was the smallest number that did not cause divergence in the Newton method), the condition number of $A_{11}$ is quite large (of the order $10^9$ here) resulting high condition numbers for $Z^TF^kZ$ and $A_{12}^TF^kA_{12}$. A rule of thumb implies a loss of a single decimal place in solving a linear system for every power of 10 increase in the condition number~\cite{elhay2011dealing}.  
For such systems, a systematic Jacobian regularization method is proposed in~\cite{elhay2011dealing} to restrict the condition number of the linear systems.   Using simple computations,  the work in~\cite{elhay2011dealing} suggests a systematic way to choose $T$ so that the condition number of the `regularised' matrix $\tilde{F}^k:=F^k+T^k$ is bounded above by some given number $\bar{\kappa}$, i.e.\ $\kappa_2(F^k+T^k)\leq \bar{\kappa}.$ Because $\tilde{F}^k$ is diagonal and invertible, it is straightforward to derive the bound on the 2-norm condition number ${\kappa_2(Z^T (\tilde{F}^k)^{-1} Z)\leq \kappa_2(\tilde{F}^k) \kappa(Z)^2}$~\cite{elhay2011dealing}, using the triangle inequality for the matrix norm. Therefore, by reducing $\kappa_2(\tilde{F}^k)$, we can reduce $\kappa_2(Z^T\tilde{F}^kZ).$ This approach reduces the loss of accuracy or convergence caused by inverting a badly conditioned Jacobian. In~\cite{abrahamnull2014}, we show that such a regularization results in an inexact Newton method, still retaining  local linear convergence properties.

\subsection{Computing  null space bases }%$Z$}

Compared to Schur methods,  null space algorithms are  advantageous for solving problems where $n_p-n_n$ is small and where the saddle point structure is present. In this paper, we are concerned with demand-driven analysis where the demand $d$ is constant  resulting in the saddle point structure of the Newton equations in~\eqref{eq:nr_lin_eq}. In leakage analysis, pressure driven models are used where the demand is a function of nodal pressures, i.e. $d:=d(h)$, see~\cite[Eq. (1)--(3)]{giustolisi2011demand}.   Since the derivative of the continuity equation with respect to pressure is nonzero in pressure-driven simulations, the $A_{22}$ block of the matrix on the left hand side of~\eqref{eq:nr_lin_eq} becomes non-zero;  the standard saddle-point structure is lost and makes the applicability of the null space algorithms limited to demand-driven cases. 

In addition to  demand-driven hydraulic analysis~\cite{nielsen1989methods,rahal1995cotree,elhay2014reformulated}, null space algorithms have been exploited in optimization,  electrical circuit analysis, computational structural mechanics, and unsteady fluid dynamics applications where problems have this saddle point structure; see~\cite[Sec.\ 6]{benzi2005numerical} for a large list of literature on such applications. In all these, {\it Kirchhoff's second law}  is exploited; it states that the energy difference (hydraulic head difference in our case) around a closed loop is zero. 

The Hardy Cross method~\cite{cross1936analysis} is in fact a null space method, although not reported as such at the time~\cite{ormsbee2006history}. 
Starting with an initial guess of flows that satisfy continuity of flow at all junctions, the method seeks flow corrections for each loop such that Kirchhoff's second law would be valid. The set of nonlinear equations in the flow corrections for each loop are solved iteratively by a first-order Taylor model until the conservation laws around all loops are met; all iterates satisfy flow continuity. In the computer era, the Hardy Cross method was extended to simultaneously solve all the loop flow corrections via the Newton method (often called ``simultaneous loop flows method'')~\cite{epp1970efficient}, improving the convergence properties of the original approach and making it fast enough for larger size networks. Note from~\eqref{eq:nullspace_sum} that a null space Newton method first finds a solution $x^*$ that satisfies flow continuity and, at each iteration,  computes adjustments $v$ in the kernel space of $A_{12}^T$ until  energy conservation is satisfied; this equivalence with the simultaneous loop flows method was made in~\cite{nielsen1989methods} and subsequent literature~\cite{rahal1995cotree,elhay2014reformulated}.   

By using the structure of the incidence matrix $A_{12}$, a number of methods that require no floating point operations can be employed to construct a sparse null basis $Z$ with desirable properties. For example,  if $Z$ is very sparse, a sparse $Z^TF^kZ$ can be explicitly formed for solution with direct methods even for large scale systems. 
In~\cite{epp1970efficient}, what they call a ``natural set of loops'' are used since they generate a low-bandwidth banded matrix $Z^TF^kZ$	 and so  reduce memory requirements in solving~\eqref{eq:nullspace_lineq}. An automatic loop numbering scheme is employed so as to generate an independent set of loops, i.e.\ ones that share the minimum number of links with other loops. 

In~\cite{rahal1995cotree}, a  graph-theoretic approach that is faster than the approach in~\cite{epp1970efficient} but requiring no floating-point arithmetic is used to generate fundamental basis that have similar memory requirements as the ``natural basis'' from~\cite{epp1970efficient}.
Using graph theoretic notation, let $\mathcal{G}(V,E)$ denote a connected, undirected graph of the water distribution network with $n_p$ edges and a set of ${n_n+n_0}$ vertices corresponding to unknown  and fixed head nodes. 
Although each link is endowed with an arbitrary fixed reference direction specifying the direction of flow, the graph is still undirected as the flow is allowed in both directions.  %\cite{schaub2014structure}
Let $\mathcal{T}(V,E_1)$ denote a spanning tree of $\mathcal{G}$,  a sub-graph of $\mathcal{G}$ that contains  a subset of edges $E_1\subseteq E$ that  span all the vertices $V$  with no loops/cycles. The process employed in~\cite{rahal1995cotree} uses Kirchhoff's classical method, which  finds the null basis by using a spanning tree of the network and then constructing loops using the respective co-tree (i.e.\ the set {$E\setminus E_1$} )~\cite{benzi2005numerical}. An edge-loop matrix is formed by  adding a single chord from any of the co-tree edges, forming  loops in the process. For each such fundamental loop, a column of $Z$ is defined where the entry for each link in the loop is set to $\pm1$ depending on the direction of flow assigned in the incidence matrix.  Such fundamental basis will have full column rank since each loop in the basis contains at least one edge which is not contained in any other loop in the basis.
Loop equations are then solved in~\cite{rahal1995cotree}  to find flows in the co-tree chords, which are then used to update the spanning tree flows at convergence. 
%When there are multiple fixed-head nodes, pseudo links with known head differentials are introduced between each pair of fixed-head nodes to make pseudo- loops before computing spanning trees. 
The property of the matrix $Z$ will of course depend on the spanning tree used. For example, the sparsity of $Z$ will depend on the particular spanning tree used; the tree for which the sum of the number of edges in the fundamental loops  is minimized results in the sparsest basis $Z$. However, finding such a tree,  or generally the sparsest $Z$,  is an NP-hard problem~\cite{benzi2005numerical}. Nonetheless, practical heuristics exist for solving this problem approximately.

%%%%%%%Table of Null space basis Z
\begin{table*}[h!]
{
\begin{center}
\caption{  SPLU, RCTM and SPQR refer to the null basis generated using sparse LU~\cite{davis2004umfpack}, the matrix reduction method of~\cite{elhay2014reformulated} and sparse QR method~\cite{davis2011spqr}, respectively.  We denote the the number of non-zero rows of the matrix $Z$ by $|\mathcal{E}_2(Z)|$. }
\label{table:Zproperties}
\tabcolsep=0.15cm
\begin{tabular}{|l|l|l|l|l|l|l|l|l|l|l|l|}
\cline{1-11}
& \multicolumn{3}{c|}{$\kappa(Z^TZ)$} &  {\small $\kappa(A_{12}^TA_{12})$} &\multicolumn{3}{c|}{$\frac{nnz(Z^TF^kZ)}{nnz(A_{12}^TF^kA_{12})}$ \%}& \multicolumn{3}{c|}{$\frac{|\mathcal{E}_2(Z)|}{n_p}$ \%}\\
\cline{2-4,6-11}
Network & 	SPLU	&	RCTM 	&   SPQR& &SPLU&	RCTM & SPQR & SPLU &RCTM & SPQR \\
\cline{1-11}
CTnet	&$1.6\times10^2$&$1.8\times10^2$&1.0&$4.2\times10^3$	&25.6 &32.7 &150.1 &65.1&65.1& 69.8\\
\cline{1-11}
Richnet	&$1.4\times10^2$&$1.4\times10^2$&1.0&$3.0\times10^4$	&13.1 &13.2 &77.7  &48.7&48.7& 61.3\\
\cline{1-11}
WCnet	&$5.3\times10^2$&$7.2\times10^2$&1.0&$4.9\times10^4$	&43.3 &45.6 &318.3 &58.0&58.0& 58.4 \\
\cline{1-11}
BWFLnet	&$3.7\times10^2$&$2.0\times10^2$&1.0&$1.6\times10^5$	&9.1  &8.7  &43.4  &51.7&51.7& 53.1\\
\cline{1-11}
EXnet	&$3.8\times10^3$&$1.3\times10^3$&1.0&$1.0\times10^5$	&115.4&102.1&1147  &80.1&80.1&82.3\\
\cline{1-11}
BWKnet	&$7.7\times10^1$&$9.6\times10^1$&1.0&$2.0\times10^6$	&2.0  &2.3  &24.4  &30.6&30.6&39.3\\
\cline{1-11}
NYnet	&$1.0\times10^4$&$1.3\times10^4$&1.0&$1.6\times10^6$	&75.9 &73.2 &1154  &74.1&74.1&79.3\\
\cline{1-11}
\end{tabular}
\end{center}
}
\end{table*}%%%%%%%%%%%%%%%%%%%%%%%%%%%%%%%%%%%%%%%
%%%%%%%%%%%%%%%%%%%%%%%%%%%%%%%%Null basis table ends here

Unlike in~\cite{rahal1995cotree}, the methods of~\cite{nielsen1989methods} and~\cite{elhay2014reformulated} do not consider virtual-loops, spanning trees, and co-trees -- an algebraic approach is taken in forming the null bases.  Since the incidence matrix $A_{12}\in\mathbb{R}^{n_p\times n_n }$ has full column rank, it follows that there always exist  permutation matrices $P$ and $Q$ such that 
\begin{equation}\label{eq:permutetriang}
 QA_{12}^TP=\begin{bmatrix}L_{1}  & L_{2}\end{bmatrix}=:L,
\end{equation}
where $L_1\in\mathbb{R}^{n_n\times n_n}$ is invertible, and $L_2\in\mathbb{R}^{n_n\times n_l}$. A straightforward substitution shows that the matrix
\begin{equation}\label{eq:fundbasisZ}
 Z=P \begin{bmatrix}-L_{1}^{-1}L_{2}\\ I_{n_l}\end{bmatrix}
\end{equation}
is a null basis for $A_{12}^T,$ i.e.\ $A_{12}^TZ=0$~\cite{benzi2005numerical}. Such a basis is called a fundamental basis~\cite{benzi2005numerical} and can be formed in many ways. 

In~\cite{nielsen1989methods}, no assumptions are made on the factorization~\eqref{eq:permutetriang} but that  $L_{1}$ be invertible and $Q=I_{n_n}$. In the formulation of~\cite{elhay2014reformulated},  also called a reformulated co-tree flows method (RCTM), a simple matrix reduction based approach is proposed for null basis generation. In~\cite{elhay2014reformulated}, it is noted that all WDNs have at least one fixed head node  (eg.\ a reservoir or tank) connected to an unknown head node.  For such a link connecting the fixed head node to the unknown head node, the corresponding row of the $A_{12}$ matrix will have only one non-zero element. This non-zero element is used as an initial pivot in interchanging rows and columns. The permutations are repeated $n_n$ times to find row and column permutations $P$ and $Q$, respectively, resulting in a lower triangular $L_1^T.$ A Gaussian substitution is then used to form the null basis~\eqref{eq:fundbasisZ}. In practice, this method results in very sparse and well conditioned null basis from the sparse matrices $L_1$ and $L_2$.

If we consider a triangular structure for~\eqref{eq:permutetriang} similarly to~\cite{elhay2014reformulated}, a well-conditioned null space basis can be computed from a sparse LU factorization; this is successfully used in the reduced Hessian methods of the SQP package SNOPT~\cite{gill2002snopt}. Let
\begin{equation}\label{eq:luZ}
 P^TA_{12}Q=LU,
\end{equation}
where $L^T=\begin{bmatrix}L_{1}  & L_{2}\end{bmatrix}$, $L_1\in\mathbb{R}^{n_n\times n_n}$ is lower-triangular with a unit diagonal,  $ U\in\mathbb{R}^{n_n\times n_n}$ is upper triangular, $L_2\in\mathbb{R}^{n_n\times (n_p-n_n)}$ and $Z$ is as in~\eqref{eq:fundbasisZ}. 
To compute $L$ and $P$, we use the sparse package UMFPACK~\cite{davis2004umfpack}, a state-of-the-art ANSI C library of routines for solving sparse linear systems via the LU factorization, which is also the \texttt{LU} function in MATLAB. We chose this package because, in addition to being one of the fastest packages for general sparse unsymmetric LU factorization problems, UMFPACK has also been shown to  produce the sparsest LU factors for  circuit simulation problems~\cite{davis2010klu}. For sparse unsymmetric matrices, to which $A_{12}$ belongs, UMFPACK uses a column pre-ordering (COLAMD~\cite{amestoy2004algorithm})  to preserve sparsity. Partial pivoting is used to limit fill-in  and to improve numerical accuracy in the Gaussian elimination~\cite{davis2004umfpack}. 

The best conditioned null space basis should theoretically be orthonormal, and these can be computed using a QR factorization.
Every full rank matrix $A_{12}\in\mathbb{C}^{n_p\times n_n},n_p \geq n_n $ has a full QR factorization
 \begin{equation}
A_{12}=\begin{bmatrix}Q_1& Q_2\end{bmatrix}\begin{bmatrix}R\\ 0\end{bmatrix},\nonumber
\end{equation} 
where $Q=\begin{bmatrix}Q_1& Q_2\end{bmatrix}\in\mathbb{C}^{n_p\times n_p}$ is unitary and ${R\in\mathbb{C}^{n_n\times n_n}}$ is upper triangular.  Moreover, the factorization ${A_{12}=Q_1R,} $ with $R_{ii}>0$ is the unique Cholesky factor of $A_{12}^TA_{12}$~\cite[Sec.\ 5.2.6]{golub1996matrix}. 
Since the columns of  $Q_2$ span $\ker(A_{12}),$  we have $Z=Q_2$ such that $\kappa_2(Z'Z)=1.$  
In principle, the QR factorization also produces the Cholesky factor of $A_{12}^TA_{12}$ and so seems attractive. However, even with the sparsest QR factorizations (eg. SPQR~\cite{davis2011spqr}, a high performance multifrontal routine for calculating sparse QR factors of large sparse matrices), the bases are much more dense than those from an LU factorization. %Moreover, the computational cost in factoring $A_{12}^TA_{12}$ is amortised by the fact that a single factorization is reused in a large number of simulations. 
%Although the sparse LU and sparse QR factorization null basis are considered as a linear algebra approach, tools like UMFPACK and SPQR rely on graph theoretic fill-reducing ordering algorithms.

Table~\ref{table:Zproperties} presents some relevant properties of the null space bases generated via the three methods discussed, the case study networks are shown in Section~\ref{sec:caseStudy}. For each network considered, SPQR generates null bases that are numerically orthonormal, the best conditioned matrices possible.  
The matrices $Z^TZ$ from the  sparse LU (SPLU), and RCTM bases  have similar condition numbers with each other,  and are also  better conditioned compared to their corresponding Schur system matrices $A_{12}^TA_{12}.$ 

Since the computational cost of solving the null space and GGA linear systems and the storage required  depend on the sparsity of $Z^TF^kZ$  and $A_{12}^TF^kA_{12}$, respectively, we analyse the relative sparsity of these matrices in the next three columns. For most of the networks, the ratio of the number of non-zero elements in $Z^TF^kZ$ to the number of non-zero elements in $A_{12}^TF^kA_{12}$ is much smaller when the SPLU or RCTM basis are used, the smallest being 2.0\% for SPLU  applied to the network BWKnet.  As also noted in~\cite{elhay2014reformulated}, this reduced sparsity implies smaller memory requirements for the linear solves compared to  Schur methods, allowing bigger networks to be analysed on the same hardware resources. On the other hand, the null bases from SPQR are about an order denser than the ones from SPLU and RCTM in all the examples. Moreover, except for the sparsest of the networks (BWKnet, BWFLnet and Richnet), SPQR results in null space linear systems with bigger memory requirements than even the Schur method; the largest by about a factor of 11.5 for NYnet. For this reason,  we do not propose the use of a QR based basis in the null space algorithm. Instead, we propose the use of the  SPLU or RCTM from~\cite{elhay2014reformulated} for the computation of well conditioned and sparse null basis; in the rest of this article we adopt sparse LU generated basis in all our implementations.
% 
% {\color{red} Do a comparison here and refer to Table 1} In principle, the QR factorization also gives us the Cholesky factor of $A_{12}^TA_{12}$ and so seems attractive. However, even with the sparsest QR factorizations, the basis are much more dense that that from an LU factorization. For example, for  the BWFLnet network, which has  2303 nodes and 2362 links -- see Section~\ref{sec:appx1} for details of networks used and the  algorithm implementations in Matlab, the number of non-zeros in $Z$ from a sparse LU, the method of~\cite{elhay2014reformulated} and  SuiteSparseQR  are 2247, 2278 and 33041, respectively. The corresponding values of $\kappa_2(Z'Z)$ for the three methods are 291, 193 and 1, respectively.  Because the QR basis is at least an order denser than the LU null basis, and the fact that the Cholesky factor of $A_{12}^TA_{12}$ needs to be computed only once anyway, we do not propose the use of a QR based basis. Instead, we propose the use of a sparse LU or the method of~\cite{elhay2014reformulated} for the computation of well conditioned and sparse null basis.  
%%%%%%Figure starts
\setlength\figureheight{0.25\textheight}
\setlength\figurewidth{0.7\textwidth}
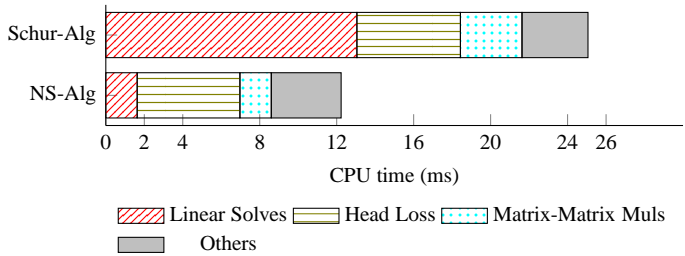
\begin{figure}[]
 \centering
\pgfplotstableread{ % Read the data into a table macro
Label   First   Second  Third   Fourth
%%%BWFL data
%%
{NS-Alg}      1.63	 	5.34 	1.63    3.63
{Schur-Alg}    13.05    5.38    3.20    3.43	
% NSM2    3.6960    2.5400    3.4800    4.7040  
% NSM3       1.5220    2.6140    3.5200    3.9640
%%%The data has been changed to the one for BWKWnet, Newton tol=1e-6; 
%{Schur-Alg}  13.05    5.38    3.20    3.43 % 13.05    5.38 (4.70+0.68)    3.20    3.40 (25.06-13.05-5.38-3.20)	
%{NS-Alg}    1.63	 5.34 1.63    3.63 %1.63	(1.1+0.53)    1.63	 5.34(4.66+0.678)  1.63    3.63(12.23-1.63 -5.34 -1.63)
}\datatable

\begin{tikzpicture}
\begin{axis}[
    xbar stacked,
    xlabel near ticks,
     xlabel={CPU time (ms)},
    xlabel style={anchor=near xticklabel,at={(xticklabel cs:0.5)},name=myxlabel},
    legend style={
        legend columns=3,%number of legends per line
        at={(myxlabel.south)},%{(xticklabel cs:0.4)},
        anchor=north,
        draw=none
    },
    ytick=data,
    axis y line*=left,
    axis x line*=bottom,
    tick label style={font=\footnotesize},
    legend style={font=\footnotesize},    
    label style={font=\footnotesize},     
    xtick={0,2,4,8,12,16,20,24,26},%
    width=0.5\textwidth,
    bar width=6mm,
%     yticklabel style={text width=3cm},%, align=right},
    yticklabels from table={\datatable}{Label},
    xmin=0,
     xmax=30,
    y=8mm,
    enlarge y limits=0.5,
    point meta=explicit,
%%%% Activate bar numbers here to
%     every node near coord/.style={
%       check for zero/.code={
%         \pgfkeys{/pgf/fpu=true}
%         \pgfmathparse{\pgfplotspointmeta-4}
%         \pgfmathfloatifflags{\pgfmathresult}{-}{
%            \pgfkeys{/tikz/coordinate}
%         }{}
%         \pgfkeys{/pgf/fpu=false}
%       }, check for zero, xshift=3, font=\bfseries},
%     nodes near coords={\pgfmathprintnumber[fixed zerofill,precision=0]{\pgfplotspointmeta}},
%%%%% end bar numbers here
    nodes near coords align=left
]
% \addplot [draw=black, pattern color = red, pattern = horizontal lines,area legend] table [x=First, y expr=\coordindex] {\datatable}; 
% % Plot the "First" column against the data index
% \addplot+[xbar] [draw=black, pattern color = black, pattern = north west lines,area legend]table [x=Second, y expr=\coordindex] {\datatable};
% \addplot [draw=black, pattern color = cyan, pattern = vertical lines, area legend] table [x=Third, y expr=\coordindex] {\datatable};
% \addplot [draw=black,fill=blue!80!yellow, area legend] table [x=Fourth, y expr=\coordindex] {\datatable};
%%%%Different colors
\addplot [draw=black, pattern color = red, pattern = north east lines,area legend] table [x=First, y expr=\coordindex] {\datatable}; 
\addplot [draw=black, pattern color = red!50!green, pattern = horizontal lines,area legend]table [x=Second, y expr=\coordindex] {\datatable};
\addplot [draw=black, pattern color = cyan, pattern = dots, area legend] table [x=Third, y expr=\coordindex] {\datatable};
\addplot [draw=black,fill=lightgray, area legend] table [x=Fourth, y expr=\coordindex] {\datatable};
% \addplot [draw=black, pattern color = red, pattern = horizontal lines,area legend] table [x=First, y expr=\coordindex, meta=First] {\datatable}; 
% \addplot [draw=black, pattern color = red!50!green, pattern = north west lines,area legend]table [x=Second, y expr=\coordindex,meta=Second] {\datatable};
% \addplot [draw=black, pattern color = cyan, pattern = vertical lines, area legend] table [x=Third, y expr=\coordindex,meta=Third] {\datatable};
% \addplot [draw=black,fill=blue!60!yellow, area legend] table [x=Fourth, y expr=\coordindex,meta=Fourth] {\datatable};

\legend{Linear Solves,Head Loss,Matrix-Matrix Muls,Others};

\end{axis}
\end{tikzpicture}
\caption{Average CPU times (in ms) for Null space and Schur  algorithms; BWKWnet network.}
\label{fig:schurvsNullCPU_1}
\end{figure}
%%%%Figure ends

Figure~\ref{fig:schurvsNullCPU_1} shows a comparison of the computational cost of a Schur (an efficient regularised implementation of GGA)
%-- see Section~\ref{sec:caseStudy} for implementation details) 
and null-space Newton algorithms using the example  network BWFLnet.  In addition, similar to the analysis done in~\cite{guidolin2013using} for the  GGA method, Figure~\ref{fig:schurvsNullCPU_1}  details the  main computational blocks of both the Schur and null space algorithms. 
The  contribution of each block to the total computational time is shown.  
It is apparent that the matrix-matrix multiplications for the linear solves, the linear solves, and the head-loss computations together constitute over 75\% of the computational time. The ``Others'' block includes  matrix-vector multiplications, residual error norm  computations, Jacobian regularizations, and  diagonal matrix inversions in the case of the Schur method, which can add up to a significant portion of the total CPU time. 

The solve time for the linear systems of the null-space Newton method is much smaller than that of the Schur method, although two linear systems are solved in the former. The bigger of the two linear systems solved by the null space algorithm,~\eqref{eq:nullspace_lineq_a},  requires only a single factorization; the  computational cost in solving  these head equations is  partially amortised by the fact that a single numerical factorization of $A_{12}^TA_{12}$ is reused in a large number of simulations as long as the system connectivity remains constant. 
In addition, although the system matrix in~\eqref{eq:nullspace_lineq} changes at each Newton iteration,  it often has a significantly smaller fraction of nonzeros than that of the Schur matrix in~\eqref{eq:nr_lin_eq_1a}; see columns 6--8 of Table~\ref{table:Zproperties}. Since these SPD matrices are solved using triangular Cholesky factorization followed by backward and forward substitutions,  the factorization and substitution steps are roughly proportional to the sparsity factors~\cite[Appx.\ C.3.2]{boyd2004convex}.

 Sparsity structures of Hessians are often exploited in many nonlinear optimization problems to cheaply compute Newton steps~\cite[Ch.\ 9.7.2]{boyd2004convex}. To solve a linear system $Ax=b,$ where $A$ is SPD and sparse, a sparse Cholesky factorization followed by forward and back substitutions is used. Such a method computes a permutation matrix $P$ and a sparse lower triangular factor $L$ such that $P^TAP=LL^T.$ Matrix reordering algorithms (eg.\ AMD~\cite{amestoy2004algorithm} is used here) are used to compute the permutation matrices so as to reduce fill, i.e.\ the number of nonzeros in the factors $L$ with corresponding zeros in the matrix $A.$ Since the pattern of nonzeros and fill in the factors mostly depend only on the pattern of nonzeros in $A$ and not on  numerical values of $A$,  matrix factorization is divided into two steps -- {\itshape symbolic factorization} (i.e.\ determining the reordering matrices $P$ and the non-zero patterns of $L$) followed by  {\itshape numerical factorization}, where the non-zero numerical values of $L$ are computed. Similar to the Hessians in~\cite[Ch.\ 9.7.2]{boyd2004convex}, \eqref{eq:nullspace_lineq} has a matrix with constant sparsity pattern although its values change at each Newton iteration. Therefore, the cost of factorization can be partially amortised by using a single symbolic factorization for all numeric factors across multiple Newton steps and extended time simulations. 

Table~\ref{table:Singlesymfact} shows the CPU time reductions gained from reusing symbolic factors, where the computation times are averaged over multiple extended time simulations (1000 simulations). The results were generated using a hydraulic solver implemented in C++  using the Eigen library of numerical solvers~\cite{jacob2012eigen} -- see Section~\ref{sec:caseStudy} for network model and implementation details. For these examples, the CPU time is on average reduced by about 1-16\%.  From Table~\ref{table:Zproperties} and the description of the networks in Table~\ref{table:CSnetworks}, we note that the more looped networks (i.e.\ the ones with higher degree) result in larger and denser linear system matrices $X^k:=Z^TF^kZ,$ which then require more flops to solve.  On the other hand, the size and density of the null space linear systems decrease with the degree of the network, making smaller the contribution of the linear solves to overall computational cost -- compare EXnet with BWFLnet and BWKWnet for example networks. 
The results in Table~\ref{table:Singlesymfact} reflect this; the reusing of symbolic factors in the linear solvers has the most impact for the denser networks, the highest  being around 15\% for EXnet, whcih is the densest of the network models. For the sparsest network, BWKWnet, the CPU time savings are  smallest at approximately 1\%.

%%%%%%%Table of Case study networks
\begin{table}[!t]
{\small
\caption{ A comparison of CPU times for the extended time simulations with Algorithm~\ref{alg:null_full}  by re-using a single symbolic factorization for $X^k$ ($t_1$), and without doing so ($t_0$). }
\label{table:Singlesymfact}
\begin{center}
\begin{tabular}{|l|l|l|}
\cline{1-3}
Network & $t_1$ (ms)  & $t_1/t_0$\\
\cline{1-3}
CTnet   & 3.64	& 1.098\\
\cline{1-3}
Richnet &  10.75	& 1.088\\
\cline{1-3}
WCnet   &  28.7	0& 1.093\\
\cline{1-3}
BWFLnet	&11.6 & 1.037\\
\cline{1-3}
EXnet	&  30.65	& 1.155\\
\cline{1-3}
BWKWnet	&  17.27& 1.011\\
\cline{1-3}
NYnet	& 535.67 & 1.147\\
\cline{1-3}
\end{tabular}
\end{center}
}
\end{table}
%%%%%%%

In~\cite{guidolin2013using},    head loss computations are shown to contribute significant  computational overheads; Figure~\ref{fig:schurvsNullCPU_1} also demonstrates this to be the case for both the Schur and null space methods.  Data parallel high performance computing techniques are analysed and used in~\cite{guidolin2013using} to accelerate pipe head loss computations at each linear solve of a GGA iteration.  
%In addition to being highly data parallel (i.e.\ for each pipe, a head loss computation is dependent only on the flow and roughness characters of the same pipe), the head loss computations are shown in~\cite{guidolin2013using}, through simulations, to take approximately 20--40\% of computational time justifying their candidacy for parallelism within the CSWNet library~\cite{guidolin2010cwsnet}.
In the next section, we propose a   partial update scheme  to reduce the computational cost associated with head loss computations, while maintaining the data parallelism (, i.e.\ for each pipe, a head loss computation is dependent only on the flow and roughness characters of the same pipe). We will define the partial update scheme and present its convergence analysis. Based on the partial updates, we also propose a stopping criteria heuristic for the null space method, which will reduce the number of nodal head computations in~\eqref{eq:nullspace_lineq_a} or line~\ref{alg:lineLL} of Algorithm~\ref{alg:null_full}.

%{\color{red} A reference to the range space of the matrices Z. ... a reference to the computational cost of the head losses ...  since these flows do not change, they can be indexed and all computations involving them can be done only once. 
%
%Based on this, in the next section, we  also note that, of the flows that change with iteration number (i.e.\ of the ones that are in the range space of Z),  some converge much earlier than others. Therefore, we propose a partial update formula for the head losses.
%
%
%It is noted in~\cite{guidolin2013using} that the  head loss computations take significant computational time; this is shown here to be the case for both the Schur and null space methods. In the next section, we propose a novel  partial update scheme  to reduce the computational cost associated with head loss computations and matrix-matrix multiplications. We will define the partial update scheme and present its convergence analysis. Based on the partial updates, we also propose a stopping criteria heuristic for the null space method, which will reduce the number of head computations in~\eqref{eq:nullspace_lineq_a} or line~\ref{alg:lineLL} of Algorithm~\ref{alg:null_full}.
%
% }
% !TEX root =  root.tex
%%%%%%%
\section{Partial update method for the null space algorithm}\label{sec:parUpdate}
\subsection{Algorithm derivation and convergence}

Lets reconsider the fundamental null space basis in~\eqref{eq:fundbasisZ} again, where $\mathcal{T}(V,E_1)$ denotes a spanning tree of the network graph $\mathcal{G}(V,E)$. Each column of $Z$ defines a fundamental cycle, which contains a chord from the set of co-tree edges ($E\setminus E_1$) and  a corresponding unique path in the spanning tree $\mathcal{T}(V,E_1)$ connecting the two nodes incident on the chord. Let $E_2\subset E$ represent the union of edges in all such fundamental cycles, i.e.\ the set of pipes involved in the loop equations. 
Then, the cardinality of the set $E_2$ equals the number of rows of the matrix $Z\in\mathbb{R}^{n_p\times n_l}$ that are not identically zero. 
If we consider the linear system~\eqref{eq:nullspace_lineq}, we can rewrite the matrix $X^k:=Z^TF^kZ$  (i.e.\ line~\ref{alg:lineZFZ} of Algorithm~\ref{alg:null_full}) as
\begin{equation}\label{eq:XkFormula}
X^k=\sum\limits_{i=1}^{n_p}{f_i^{k} z_i z_i^T},
\end{equation}
where $f_i^{k} $ is the $i^{th}$ diagonal element of the diagonal matrix $F^k$ and $z_i^T$ is the $i^{\text{th}}$ row of $Z^T.$ Let $\mathcal{E}_2$ be the index set of pipes  belonging to the set $E_2.$ Then,~\eqref{eq:XkFormula} reduces to 
\begin{equation}\label{eq:XkFormula1}
X^k=\sum\limits_{i\in\mathcal{E}_2}{f_i^{k} z_i z_i^T},
\end{equation}
because the rows of $Z$ that are identically zero have no contribution. In other words, flow updates at each iteration of the null space Newton method will not involve the pipes not belonging to $E_2$. 

 \begin{figure}[tbh!]%
    \centering
\includegraphics[width=0.5\textwidth]{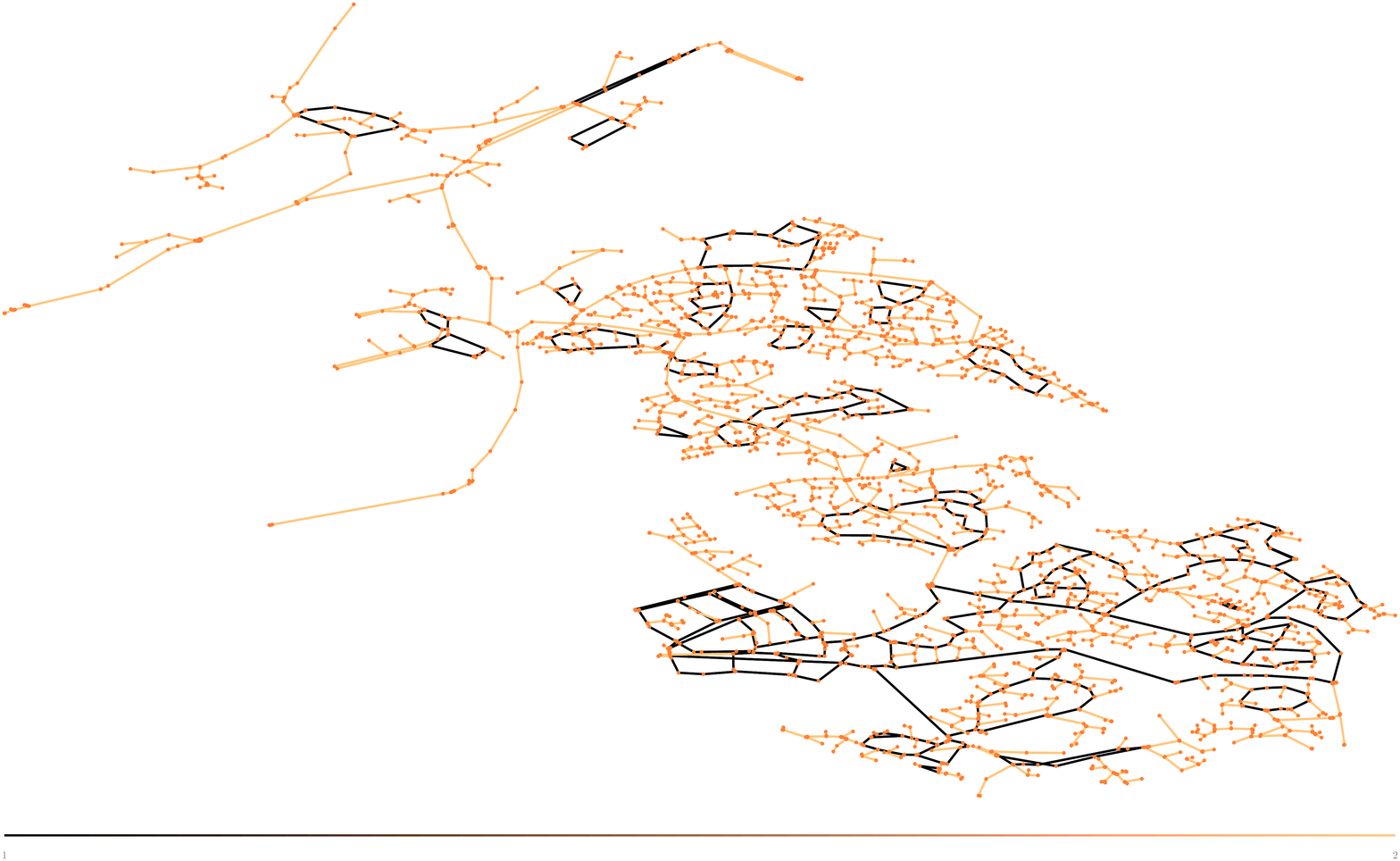}
    \caption{Proportion of links involved in flow updates are shown using black edges for the network BWKWnet.}%
\label{fig:fracloopflows} 
\end{figure} 
 To compute each Newton step in~\eqref{eq:nr_lin_eq}, the Jacobian $\nabla f(\cdot)$ is re-evaluated at  each flow iterate  by computing the associated frictional headlosses.  Figure~\ref{fig:fracloopflows} shows, in black, the $30.6\%$ of all pipes that form $E_2$ for the network BWKWnet. 
The last three columns of Table~\ref{table:Zproperties} also show the fraction of pipes involved in the loop equations when  three different methods are used to generate the null bases $Z.$ We note that this fraction can be as small as only 30\% of all links for the sparsest network. When using the SPLU and RCTM null bases, across the example networks, from 20\% to 70\% of the pipes do not belong to $E_2$, and so have flows that do change at each Newton iteration.
Therefore, we propose that the head loss equations be updated only for the set of pipes belonging to the set $E_2$, reducing computational cost significantly. 

 \begin{figure*}[tbh!]%
    \centering
       \subfloat[$k=1$]{\includegraphics[width=0.5\textwidth]{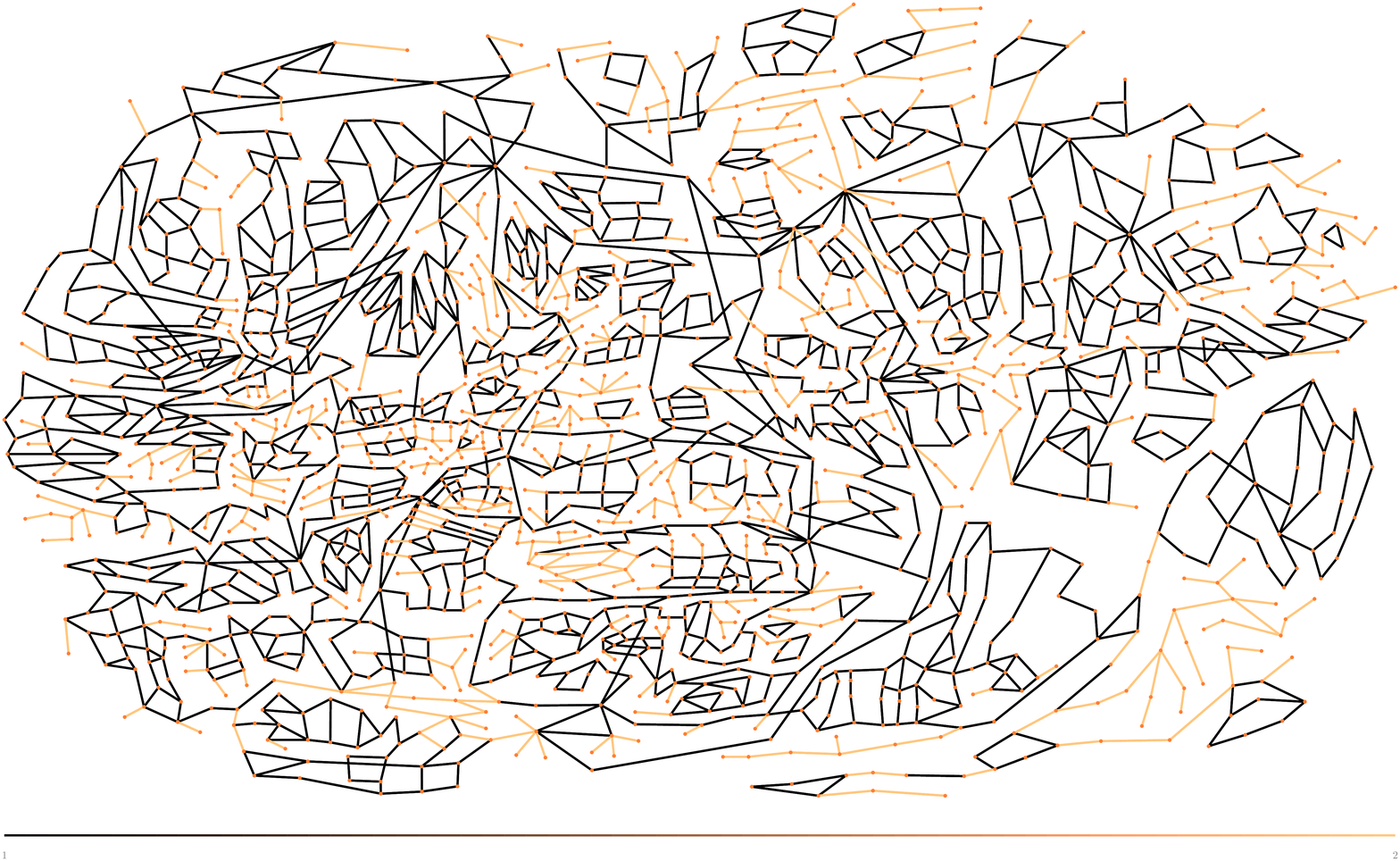}\label{fig:flowk_1} }
    \subfloat[$k=10$]{\includegraphics[width=0.5\textwidth]{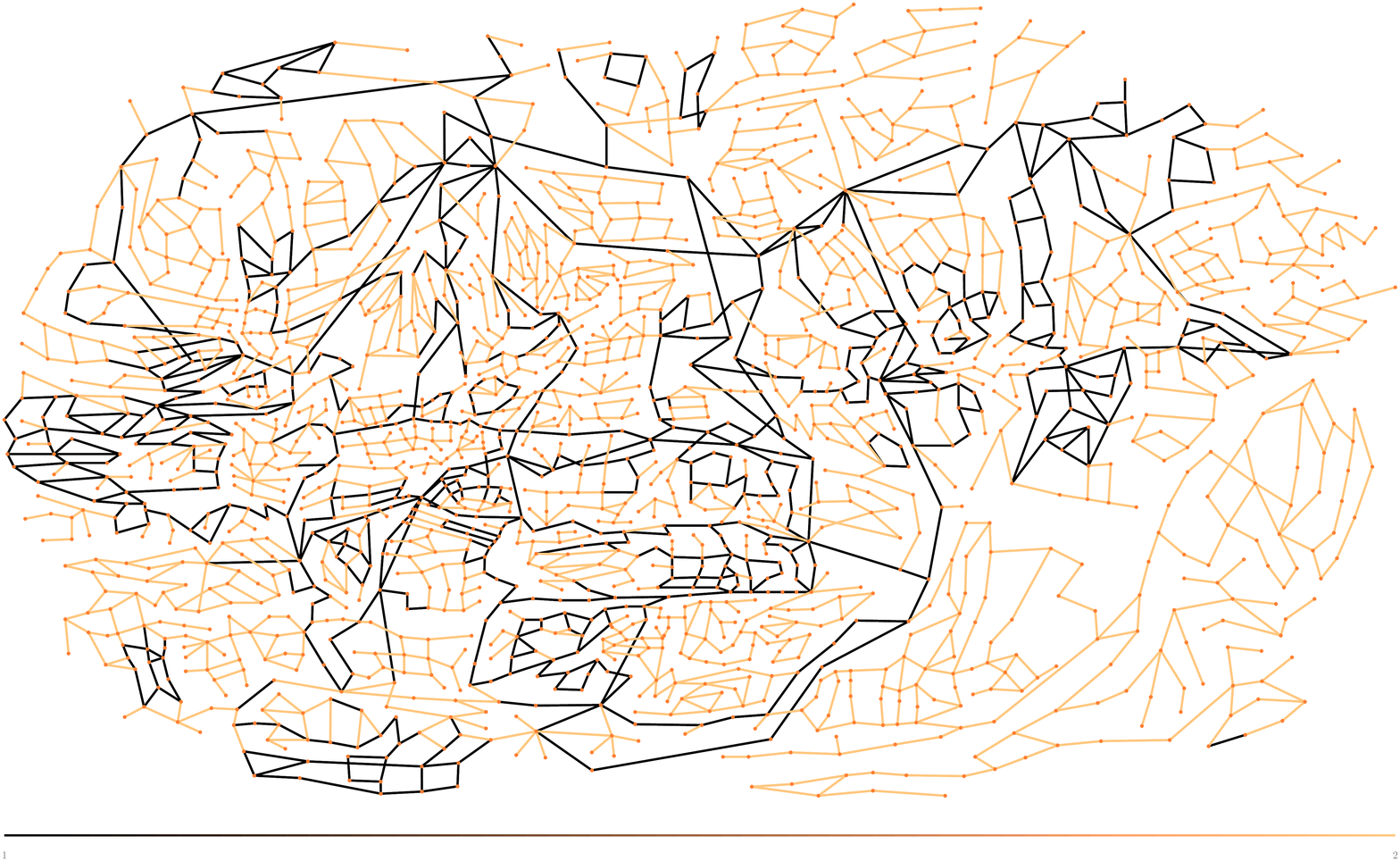} \label{fig:flowk_2}}
    \caption{The convergence of flows at Newton iteration $k$. The black edges show pipes whose flow values have  not yet converged at iteration $k.$}%
    \label{fig:pcg}%
\end{figure*} 
\begin{figure}[!bt]
 \centering
\includegraphics[width=0.45\textwidth]{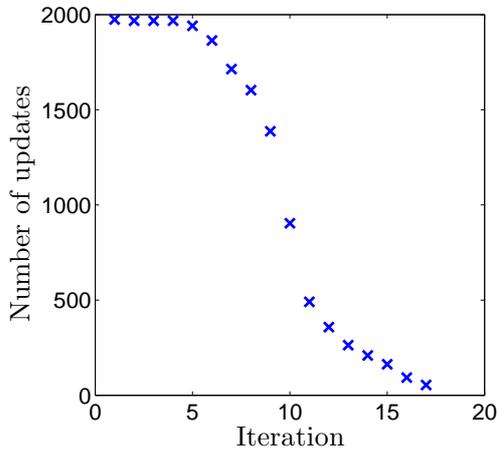}%FieldLab_updates_Null.eps}
\caption{The number of link flows that change by more than $1e^{-9}=\delta_N*\epsilon$ at each Newton iteration;  $\delta_N=1e^{-6},$  $\epsilon=1e^{-3}$ and  the EXnet network model, with $n_p=2465$ is used.}
\label{fig:numUpdates}
\end{figure}

The  plot in Figure~\ref{fig:flowk_1} shows (in black) the 1974 loop flows in $\mathcal{E}_2$ for the  network model EXnet, which roughly consists of $80\%$ of the links.  Here we also study the convergence of these flows; for example,   Figure~\ref{fig:flowk_2}  shows the  fraction of flows in  $\mathcal{E}_2$ that `have not converged' by the $10^{th}$ Newton iteration (, i.e.\ a flow has converged in the sense that it  does not change by more than a  small number at the given iteration, here  ${1e^{-9} \text{ m}^3/\text{s}}$). We propose that further computational savings can be made by updating the head losses ($G^k$ and $F^k$) only for flows that have not converged in this sense. This would reduce the number of flops required for these operations and so reduce overall computational complexity further.  For the  network model EXnet, Figure~\ref{fig:numUpdates} shows the number of flows that `have not converged'  at each iteration. Although the set of links in $E_2$ is a much bigger fraction ($\approx  80\%$)  of all links for the EXnet model, approximately half of the computationally demanding headloss computations can be avoided by  considering updates for only unconverged flows.

We introduce the concept of a partial update set here. Let the residual error tolerance for the Newton iterations be $\delta_N $ and let $0<\epsilon<1$ be a small number.  We define the  (partial) update set at the $k^\text{th}$ iteration as
\begin{equation}\label{eq:parupdateset}
\mathcal{U}^k(\epsilon,\delta_N):=\{i\in\mathcal{E}_2:|s_i^k|:=|q_i^{k+1}-q_i^k|\geq \epsilon \delta_N \}, 
\end{equation}
where $s_i^k$ is the Newton step in the flow update of the $i^\text{th}$ pipe or link at iteration $k$.  
At each Newton iteration, we  need not compute all the frictional headlosses across the network of links $i=1,\ldots,n_p$; the partial update formula recalculates headlosses only for the smaller set $\mathcal{U}^k(\cdot)$ as
\begin{equation}\label{eq:parupdateG}
G^{k+1}_{ii} = r_i |q^{k+1}_i|^{n_i-1},\;  f^{k+1}_{i}  = n_i G^{k+1}_{ii},\; \forall i\in \mathcal{U}^k(\epsilon,\delta_N), 
\end{equation}
for a HW model, and similarly for a DW model. Moreover, this results in $X^k:=Z^TF^kZ$ (on line~\ref{alg:lineZFZ} of Algorithm~\ref{alg:null_full})  to be only partially updated satisfying the following update formula:
\begin{equation}\label{eq:parupdateX}
X^k=X^{k-1}+\sum\limits_{i\in\mathcal{U}^k(\cdot)}{(f_i^{k}-f_i^{k-1}) z_i z_i^T},
\end{equation}
where $z_i$ is the $i^{\text{th}}$ column of $Z^T.$
%where $\mathcal{U}^k$ is as defined in~\eqref{eq:parupdateset}.

In an exact Newton method for solving nonlinear equations $f(x)=0$, the linear systems $\nabla f(x^k)s^k=-f(x^k)$ are solved to find the Newton steps  $s^{k}:=x^{k+1}-x^k$ at each iteration. By Proposition~\ref{prop:ENM}, Algorithm~\ref{alg:null_full} is a Newton method for the hydraulic equations in~\eqref{eq:hydro_nl_eqn}. If Algorithm~\ref{alg:null_full} is coupled with the partial update formulae~\eqref{eq:parupdateG} and~\eqref{eq:parupdateX}, we introduce errors to both the Jacobian $\nabla f(x^k)$ and the right-hand side vector $f(x^k)$; an approximate linear system is solved and therefore the solution is `inexact'. In~\cite{abrahamnull2014}, we prove that there always exist update parameters $\epsilon$ that guarantee this inexact Newton method stays convergent. Since this proof is  outside the scope of the present paper, we state the claim here and investigate the parameters of the update set using simulations. 

\begin{proposition}(Partial-Updates Inexact Newton Method)
Assume the Newton method of Algorithm~\ref{alg:null_full} with error tolerance $\delta_N$ is coupled  with  the partial update formulae for the head losses as in~\eqref{eq:parupdateG}. Then, with the mild assumption that flows that have converged do not move away from the solution, there always exists a sufficiently small $\epsilon >0$, and update set $\mathcal{U}^k(\epsilon)$, such that the partial-update Newton scheme is an inexact Newton method, guaranteeing at least q-linear local convergence.  
\end{proposition}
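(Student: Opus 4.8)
The plan is to recognize the partial-update scheme as an instance of an \emph{inexact Newton method} in the sense of Dembo--Eisenstat--Steihaug and then invoke the standard local-convergence theorem for such methods. Recall that an inexact Newton iteration produces steps $\tilde s^k$ satisfying $\nabla f(x^k)\tilde s^k = -f(x^k) + r^k$ with the residual controlled by a forcing term, $\|r^k\| \leq \eta_k\|f(x^k)\|$; if $x^*$ is non-degenerate (which Proposition~\ref{prop:ENM} supplies) and $\eta_k \leq \eta < 1$ uniformly, then the iterates converge q-linearly to $x^*$ from any sufficiently close start~\cite{nocedal2006numerical}. The whole task therefore reduces to exhibiting a uniform forcing bound $\eta < 1$ for the step actually produced by the partial update.

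First I would make the perturbation explicit. The partial update replaces the exact diagonal blocks $F^k$ and $G^k$ by stale versions $\widetilde F^k,\widetilde G^k$ that agree with the exact quantities on the update set $\mathcal{U}^k$ and retain older values off it; since $F^k$ enters the Jacobian block $NA_{11}$ and $G^k=A_{11}^k$ enters the residual, we obtain an approximate Jacobian $\widetilde{\nabla f}(x^k)$ and an approximate residual $\widetilde f(x^k)$. The computed step solves $\widetilde{\nabla f}(x^k)\,\tilde s^k = -\widetilde f(x^k)$, so with respect to the exact operator $r^k = [\,f(x^k)-\widetilde f(x^k)\,] + [\,\nabla f(x^k)-\widetilde{\nabla f}(x^k)\,]\tilde s^k$. (There is no circularity in defining $\mathcal{U}^k$: line~\ref{alg:lineZFZ} solves for $\tilde s^k=Zv^k$ with the current partially-stale operator, and only then is $\mathcal{U}^k$ read off from $\tilde s^k$ to decide which head losses to recompute.) Both bracketed terms are supported only on the stale links $\mathcal{E}_2\setminus\mathcal{U}^k$, because pipes outside $E_2$ have identically zero rows in $Z$ and thus fixed, correctly-computed head losses.

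The crux is to bound these two error terms by $\epsilon\delta_N$. Each stale entry differs from its exact value only through the head-loss function evaluated at a stale flow; using the Lipschitz continuity of the head-loss model (established for Darcy--Weisbach and regularized Hazen--Williams in the discussion following Proposition~\ref{prop:ENM}), the per-pipe error is bounded by a constant times the \emph{accumulated} flow movement of that pipe since it was last recomputed. Here the hypothesis that converged flows do not drift back away from the solution enters decisively: it guarantees that the tail of per-iteration steps of a stale pipe (each below the threshold $\epsilon\delta_N$) sums to a controlled multiple of $\epsilon\delta_N$, rather than growing linearly with the iteration count. Aggregating over pipes yields $\|f(x^k)-\widetilde f(x^k)\| \leq C_1\epsilon\delta_N$ and $\|\nabla f(x^k)-\widetilde{\nabla f}(x^k)\| \leq C_2\epsilon\delta_N$, with constants depending only on the network size, the Lipschitz bounds, and $\|Z\|$.

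Finally I would close the forcing estimate. Because Algorithm~\ref{alg:null_full} iterates only while $\|f(x^k)\|_\infty>\delta_N$, the denominator satisfies $\|f(x^k)\|\geq c\,\delta_N$ (up to norm-equivalence), and $\tilde s^k$ is uniformly bounded in a neighborhood of $x^*$; hence $\|r^k\|/\|f(x^k)\| \leq C\epsilon$ for a single constant $C$. Choosing $\epsilon < 1/C$ gives $\eta_k\leq C\epsilon=:\eta<1$ for all $k$, and the inexact Newton theorem delivers at least q-linear local convergence. The main obstacle is the staleness bound of the third step: without the no-drift assumption the accumulated movement of a stale pipe could grow with $k$, so making the tail-sum argument rigorous --- pinning down precisely what ``converged'' means and why the hypothesis bounds the tail by $O(\epsilon\delta_N)$ --- is where the real work lies.
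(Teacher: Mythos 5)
Your proposal is correct in outline and follows exactly the route the paper announces: the paper itself gives no argument here, deferring entirely to \cite[Proposition 1]{abrahamnull2014}, but both the surrounding text (``by formulating the partial updates as an inexact Newton method\ldots'') and the statement itself make clear that the intended proof is precisely the Dembo--Eisenstat--Steihaug forcing-term argument you construct, with the no-drift hypothesis doing the work of keeping the stale head-loss errors of order $\epsilon\delta_N$ and the active stopping test supplying the lower bound $\|f(x^k)\|\gtrsim\delta_N$. The one step you rightly flag as informal --- turning the per-iteration threshold $|s_i^k|<\epsilon\delta_N$ plus the no-drift assumption into a uniform bound on the \emph{accumulated} staleness of a never-updated pipe --- is exactly the technical content delegated to the cited companion proof, so your sketch matches the paper's approach rather than diverging from it.
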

%%%%%%%
\begin{proof}\nonumber
See~\cite[Proposition 1]{abrahamnull2014}.  
\end{proof}

For each of the networks in the case study, an extended simulation with 96 time steps was performed for the range of  partial update parameters  ${\log_{10}(\epsilon)=\begin{bmatrix}-9:0.5:-1\end{bmatrix}}$ and Newton tolerances ${\log_{10}(\delta_N)=\begin{bmatrix}-9:0.5:-3\end{bmatrix}}.$ Figure~\ref{fig:epsDelta1} and Figure~\ref{fig:epsDelta2} show a sweep of these parameter values for the Networks BWKWnet and EXnet, respectively. Considering the accuracy of the solution, i.e.\ the residual norm of the nonlinear equation at the solution $\|f(x)\|_\infty$, to depend on $\epsilon$ and $\delta_N$, we  plot its contours in Figures~\ref{fig:epsDelta1b} and Figure~\ref{fig:epsDelta2b} (with the maximum number of Newton iterations allowed $k_{max},$ set to 100 here). %, and with initial conditions $q^0=x^*$). 
As $\epsilon \to 0$, the partial-update inexact Newton method approaches the original exact Newton method. In  Figure~\ref{fig:epsDelta1}, for $\epsilon \leq 1e^{-2.5}$, the inexact Newton method takes the same number of iterations as the exact method while satisfying the required level of error tolerance. For the example in  Figure~\ref{fig:epsDelta2}, $\epsilon < 1e^{-3}$ is sufficient. 

If $\epsilon $ is too large,  the algorithm with partial updates  either takes more iterations to attain the same level of accuracy in the solution or the required tolerance cannot be met within  $k_{max}$ iterations because the inexact Newton steps become significantly different to the steps of the exact Newton method.  Moreover,  if $\epsilon$ is not sufficiently small, the accuracy of the solution from the partial update method may not be within  the required Newton  tolerance  since the  residual error norm computations with the partially updated matrices $G^k$ would be far  from the true values. Similarly to the results in Figures~\ref{fig:epsDelta1} and~\ref{fig:epsDelta2} , we found that an epsilon value of  $1e^{-3}$ is sufficiently small and works well for all models considered under different error tolerances. 
Due to space limitations, we have shown only  EXnet and BWKWnet here because they lie on the opposite extremes of our case study models when considering their average degree or `loopedness'.

 \begin{figure*}[!tbh]%
    \centering
       \subfloat[]{\includegraphics[width=0.4\textwidth]{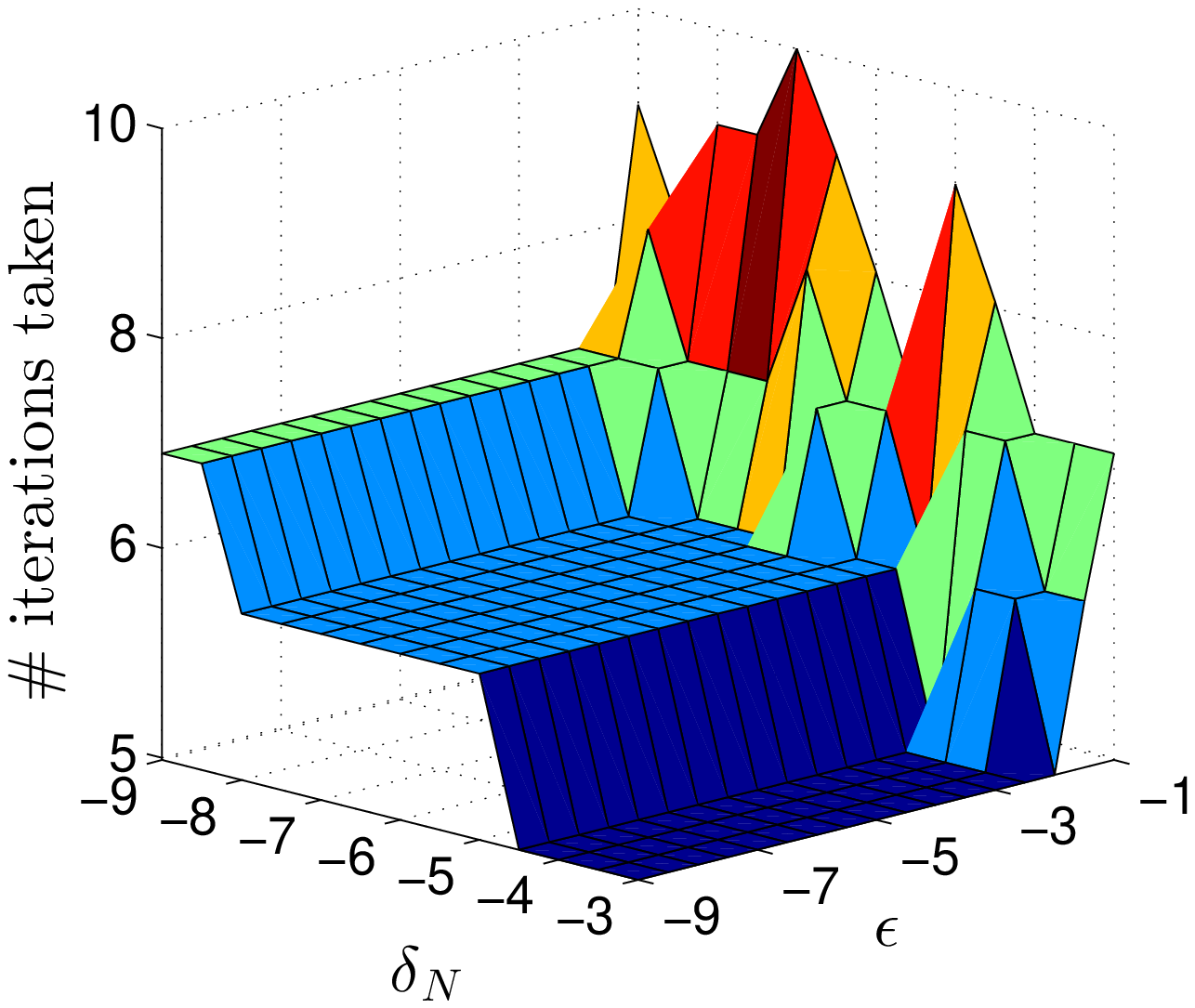}\label{fig:epsDelta1a} }
    \subfloat[]{\includegraphics[width=0.4\textwidth]{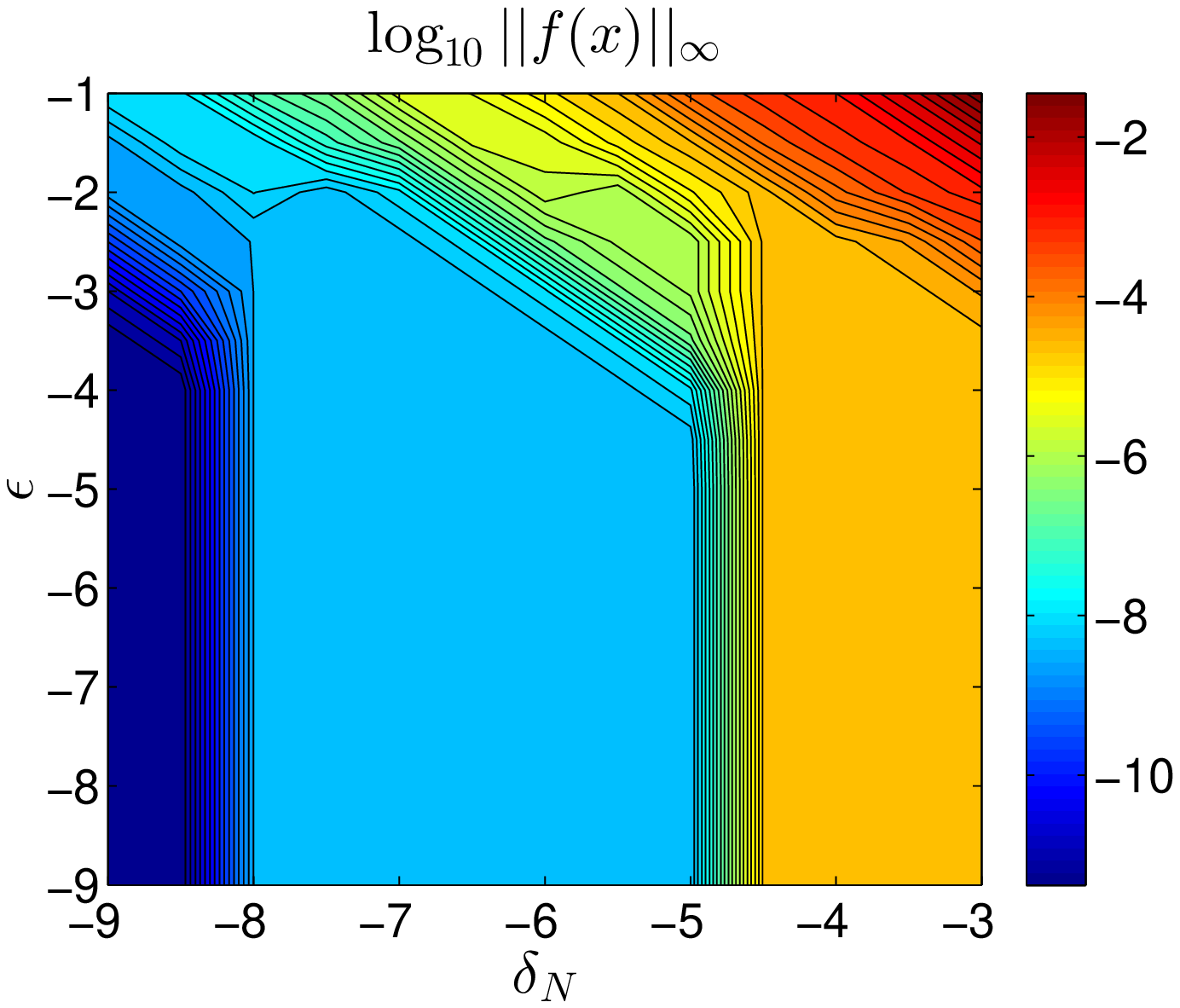} \label{fig:epsDelta1b}}
    \caption{A  parameter sweep for $\epsilon$ and $\delta_N$  against (a) the number of iterations (b) residual norm of the nonlinear equation, $\|f(x)\|_\infty$, for network BWKWnet.  }%
    \label{fig:epsDelta1}%
\end{figure*} 
 \begin{figure*}[tbh!]%
    \centering
     \subfloat[]{\includegraphics[width=0.4\textwidth]{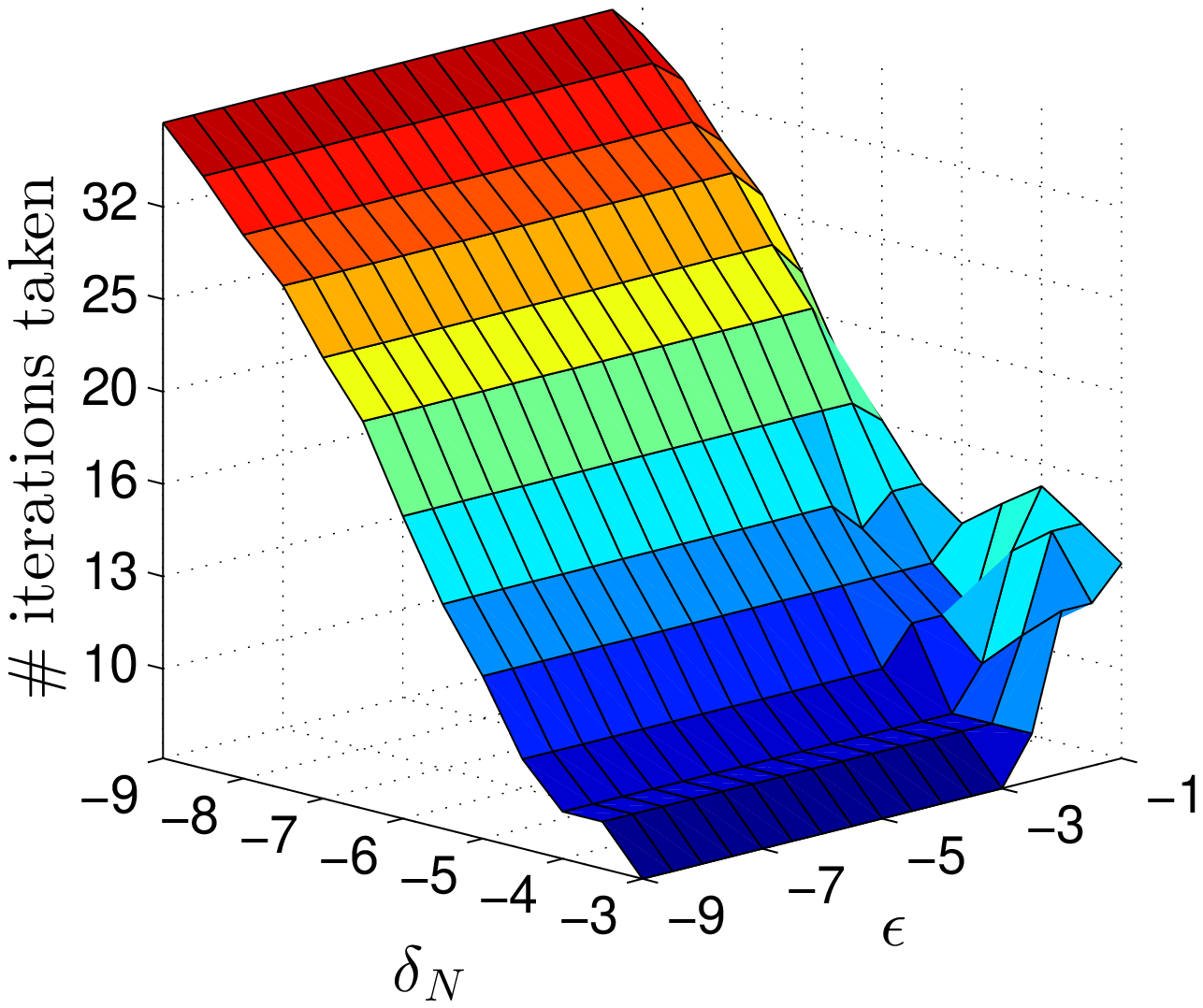}\label{fig:epsDelta2a} }
    \subfloat[]{\includegraphics[width=0.4\textwidth]{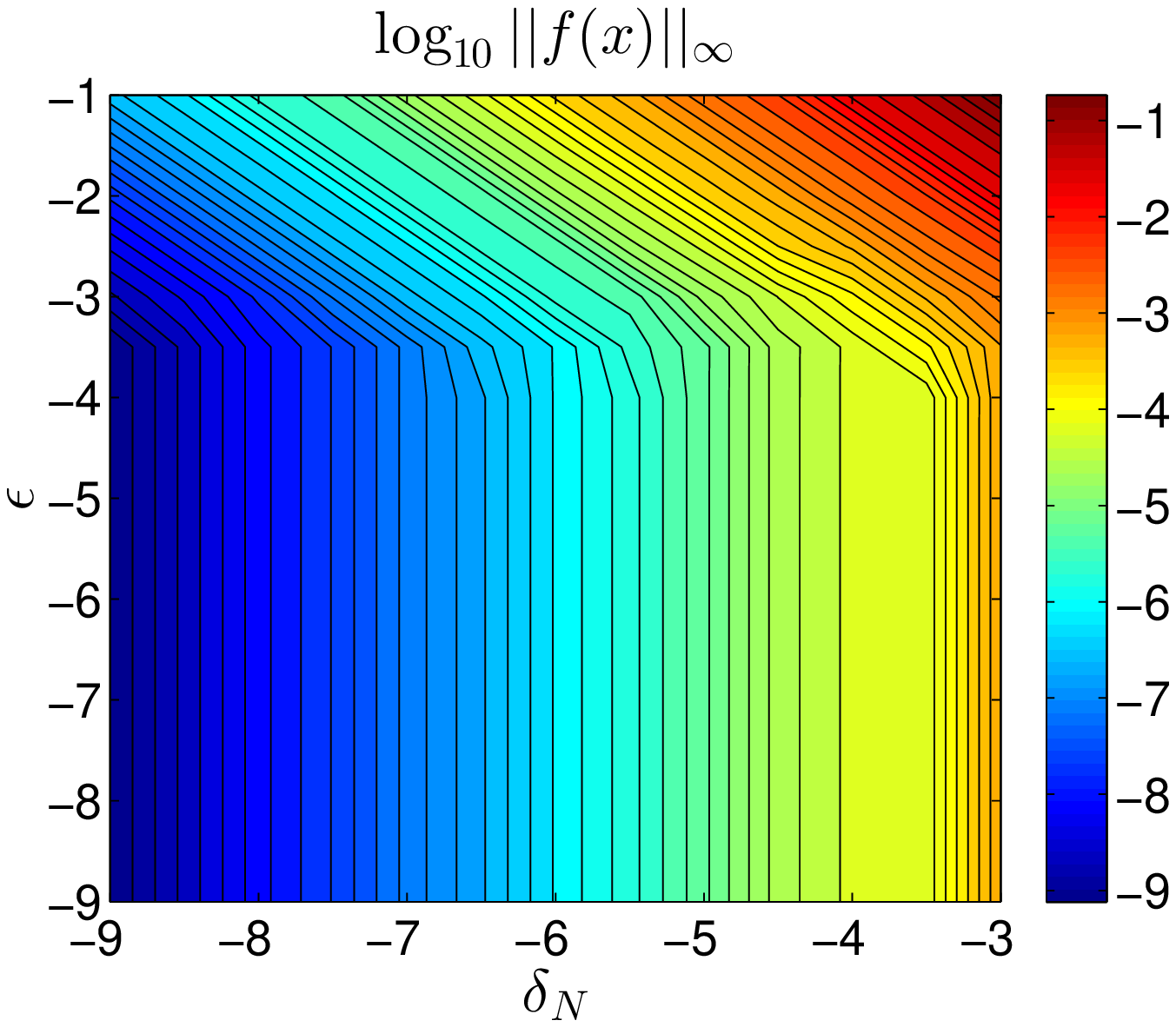} \label{fig:epsDelta2b}}
    \caption{A  parameter sweep for $\epsilon$ and $\delta_N$  against (a) the number of iterations (b) residual norm of the nonlinear equation, $\|f(x)\|_\infty$, for network EXnet. }%
    \label{fig:epsDelta2}%
\end{figure*}

\subsection{Stopping criteria for the null space algorithm}\label{subsec:stopcriteria}

The nulls pace method of Algorithm~\ref{alg:null_full} requires the satisfaction of the set tolerance $\delta_N$ to stop, provided the nonlinear equation residual inequality $\|f(q^k,h^k)\|\leq \delta_N$ can be achieved under the maximum number of iterations allowed. Although some have used the convergence of the flow conservation residual, $\|A_{12}^Tq-d\|_\infty$, as a stopping criteria, recent literature~\cite{elhay2011dealing,kovalDiscussElhay2013} has shown for the GGA method that the flow conservation equation often converges to within machine precision many iterations before the energy residual becomes sufficiently small. For Algorithm~\ref{alg:null_full}, flow conservation is actually satisfied by all Newton iterates; see~\eqref{eq:nullspace_sum}. Therefore, the convergence of the flow continuity equation should not be used as a stopping criteria. It is necessary to compute the head to determine convergence using either the residual of the entire nonlinear equation~\eqref{eq:hydro_nl_eqn},  or convergence of nodal head differences at each iteration, as also proposed for the GGA method in~\cite{elhay2011dealing,kovalDiscussElhay2013}. 

Computing the pressure heads at each iteration by solving~\eqref{eq:nullspace_lineq_a} and the associated norm of the residual, together, add significant computational cost. However, unlike for the GGA method,  the flow iterations are independent of the head values in the null space formulation -- see~\eqref{eq:nullspace_lineq} and~\eqref{eq:nullspace_lineq_a}, or Algorithm~\ref{alg:null_full}. This brings the possibility that we can delay head computations until near convergence, where pressure heads can be computed to check convergence of  the residual.
%%%%%%%%%%%%%%%Figure
  \begin{figure}[]%
    \centering
    \subfloat{{\includegraphics[width=0.5\textwidth]{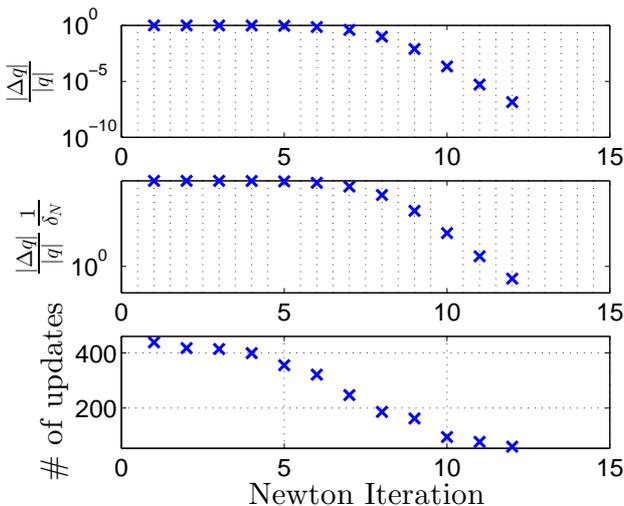} }}
\caption{Convergence of various variables with Newton iteration $k$. Top: Middle:  Bottom: the number of flows from the set $\mathcal{E}_2$ that remain in the update set $\mathcal{U}^k(\epsilon,\delta_N),$ where $\delta_N=1e-6$, $\epsilon=1e-3.$}
    \label{fig:convdq_Richnet}%
\end{figure} 
 %%%%%%%%%%%%%%%enf Figure
Traditionally, in open source software like EPANET, a pragmatic convergence criterion is applied based on the sum of all flow changes as a proportion of the total flow rates in all links~\cite[pp.\ 153]{rossman2000epanet}. The top plot in Figure~\ref{fig:convdq_Richnet} shows the ratio $\frac{\|\Delta q\|_1}{\|q\|_1}$ for the network Richnet. By default, EPANET uses $0.001$ for this number as a convergence criteria. We also plot the ratio of this number to the Newton tolerance set in the middle plot of Figure~\ref{fig:convdq_Richnet}. From this and similar plots at different values for $\delta_N$ for all the models, this ratio becomes less than 1 near convergence. In agreement with our discussions of Figure~\ref{fig:numUpdates}, the bottom plot in Figure~\ref{fig:convdq_Richnet} also shows that size of the update set diminishes toward zero, i.e.\ the update set $\mathcal{U}^k$ is significantly smaller than $\mathcal{E}_2$ near convergence. Therefore, we can reduce the overhead in computing the pressure heads and  error norm computations by computing them only when the fraction of non-converged flows is smaller than the set $\mathcal{E}_2$. Therefore, we can reduce the overhead in computing the pressure heads and  error norm computations by starting such computations  only when either the fraction of non-converged flows is significantly smaller than the set $\mathcal{E}_2$ or when the ratio $\frac{\|\Delta q\|_1}{\|q\|_1}$ is less than $\delta_N.$ In the next section, we use the heuristic condition $|\mathcal{U}^k|<\lceil a\times |\mathcal{E}_2|\rceil, \; a=0.5 $ OR $\frac{\|\Delta q\|_1}{\|q\|_1 } \leq \delta_N$ to reduce computation time significantly for all example network models. 
\section{Simulation Results}\label{sec:caseStudy}
We use seven networks, some of which are proprietary operational water network models, to analyse the null space  method we have proposed. The networks range in size from 444 pipes to 14,831 pipes and have varying levels of `loopedness' as measured by the ratio of loops to total number of pipes ($\frac{n_l}{n_p}$) or the average degree of the graph, i.e.\ the average number of pipes incident at each node. The basic properties of the case study networks and their relevant topological characteristics are given in Table~\ref{table:CSnetworks}, ordered by increasing size.  The sparsity of the incidence matrices are  around 3 for all these sparse network graphs. We note that, generally, the densest of water networks are still sparse in the mathematical sense; for example, compare with social and economic networks that can have orders of magnitude larger degrees~\cite{jackson2008social}. The proprietary operational models BWFLnet and BWKnet~\cite{abrahamnull2014} are from a typical network in a built up (urban) area in England, UK. They are parts of a distribution system used in experimental studies of dynamic sectorization and optimal pressure control of water supply systems by the InfraSense Labs in partnership with a UK water utility~\cite{Wright2014}. 

The networks Richnet (a medium-sized real network from Yokshire Water, UK~\cite{van2004operational}), WCnet (Wolf-Cordera, part of a real network in Colorado Springs, USA~\cite{lippai2005colorado}), EXnet (an artificial network for design and rehabilitation optimization that has a large number of triangular and trapezoidal loops~\cite{Farmani2004}) and NYnet (a approximately all-pipe model of a real network from~\cite{ostfeld2008battle}) are the ones analysed 
%the networks named N1, N3, N4 and N7, respectively, ]
in~\cite{elhay2014reformulated}. The relatively smaller size artificial network  C-town~\cite{ostfeld2011battle}	, called  CTnet here, is also used. 
 
% \begin{figure*}%
%     \centering
%        \subfloat[BWFLnet]{{\includegraphics[height=0.25\textheight]{BWFLnet.eps} }}%
%     \qquad
%     \subfloat[BWKnet]{{\includegraphics[height=0.25\textheight]{BWKnet.eps} }}\\
%     \subfloat[CTnet]{{\includegraphics[height=0.25\textheight]{CTnet.eps} }}%
%     \qquad
%   \subfloat[NYnet]{{\includegraphics[height=0.25\textheight]{NYnet.eps} }}
% 
%     \caption{ Graphs of networks used in our numerical study:  BWFLnet, BWKnet, CTnet, NYnet.}%
%     \label{fig:example}%
% \end{figure*}

%%%%%%%Table of Case study networks
\begin{table}[t!]
{\small
\caption{ Size and graph characteristics of the different case study networks; $\texttt{incMat}$ denotes the incidence matrix for the vertices of a network's graph and \texttt{deg} represents the graph's average degree (i.e.\ $\texttt{deg}=2n_p/n_n$) and . }
\label{table:CSnetworks}
\begin{center}
\tabcolsep=0.13mm
\begin{tabular}{l|l|l|l|l|l|l|c|c}
\cline{1-9}
Network & $n_p$	& $n_n$ & $n_l$& $\frac{n_l}{n_p}\%$	&$n_0$	& {$\texttt{deg}$} & $\frac{
nnz(\texttt{incMat})}{n_n}$&{Headloss }\\
\cline{1-9}
CTnet   & 444	& 388	&  48	&10.8	&8	&  2.24 & 3.25   		&  HW\\
\cline{1-9}
Richnet & 934	& 848	& 86	&9.2	&8	& 2.20 	&3.17    	&  DW\\
\cline{1-9}
WCnet   & 1976	& 1770	& 206	&10.4 	&4	& 2.23	&3.22   	&  DW \\
\cline{1-9}
BWFLnet	& 2369	& 2303	& 66	&2.8 	&2	& 2.05 	&3.05	&  HW\\
\cline{1-9}
EXnet	&2465	& 1890	& 575	&23.3	&3	& 2.61	&3.55  		& DW \\
\cline{1-9}
BWKWnet	& 4648	& 4577	& 71	&1.5 	&1	& 2.03	&3.03  	& HW \\
\cline{1-9}
NYnet	&14830	& 12523	& 2307	&15.6 	&7	& 2.37&3.29 & DW \\
\cline{1-9}
\end{tabular}
\end{center}
}
\end{table}
%%%%%%%

All computations were performed within MATLAB R2013b-64 bit for Windows 7.0 installed on  a 2.4 GHz Intel\textsuperscript{\textregistered} Xeon(R) CPU E5-2665 0 with 16 Cores.  To make the CPU time profiling most accurate, the number of active CPUs used by Matlab was set to one before starting profiling. This prevents spurious  results from  the use of multiple cores by some of the solvers used. For example, the approximate minimum ordering (AMD) and its variants (minimum fill, column minimum degree ordering, etc.\ ) and graphs-based permutations used in the sparse Cholesky, LU and QR factorizations and solves, within Matlab and SuiteSparse, take advantage of parallelizing work over multiple cores; these should be disabled to make a fairer comparison of the proposed algorithms. Moreover, a large number of simulations (1000) were used to analyse each case study because small variations in task scheduling by the processor could result in variations not caused by computational complexity only. 
The numerical tests were performed by randomly varying the demands from the typical diurnal demand profile. As in~\cite{elhay2014reformulated} and other referenced literature, all analysis presented here do not consider control devices like pumps and check valves.
The method for computing the  Darcy-Weisbach resistance coefficients  was written in C++ and implemented as a MATLAB MEX-function, which has an execution time similar to a C++ implementation. 

To reuse the symbolic factors of the Cholesky factorization in~\eqref{eq:nullspace_lineq} for the simulations in Table~\ref{table:Singlesymfact}, the SimplicialLLT sparse Cholesky module of Eigen 3.2.1~\cite{jacob2012eigen} was used in a proprietary C++ implementation of the null space method of Algorithm~\ref{alg:null_full}. 
This implementation decouples the linear solve into  \texttt{analyze()}, \texttt{factorize()} and \texttt{solve()} steps. The analysis step applies the AMD preordering~\cite{amestoy2004algorithm} followed by a symbolic factorization on the sparsity of $Z^TZ$, which is the same constant structure used for all iterations.  The \texttt{factorize()} and \texttt{solve()} functions perform a numeric decomposition of matrix $Z^TF^kZ$ and the solution by substitution, respectively, at each Newton iteration.  
 For all presented tests,  the computational times can vary with hardware and operating systems. The trends in the results, nonetheless, should be valid generally. Future work includes the implementation of these methods in C++.% using the sparse libraries in \text{SuiteSparse}.  

\begin{table*}[ht!]
\caption{ Mean CPU times for the Schur and null space methods applied to networks of different size and connectivity; the accuracy and partial update set parameter were set to $\delta_N=1e^{-6}$ and $\epsilon=1e^{-3}$, respectively. }
\label{table:cputimes}
\begin{center}
\begin{tabular}{|l|l|l|l|l|l|l|l|}
\cline{1-8}
\multicolumn{1}{|c|}{} &\multicolumn{4}{c|}{CPU times (ms)}& \multicolumn{3}{c|}{$\frac{t(\text{Schur})}{t(\text{NSM})}$ }\\
\cline{2-8}
Network & Schur&NSM1 & NSM2 & NSM3 &NSM1 & NSM2 & NSM3 \\
\cline{1-8}
CTnet   & 9.34	& 6.95	&  6.30	& 5.27	& 1.34  & 1.48  & 1.77  \\
\cline{1-8}
Richnet  &15.04   &10.81 &  8.29    &6.93	&  1.39&1.82   &2.17  \\
\cline{1-8}
WCnet    &  26.27  &20.87  &18.50   &16.22& 1.26 & 1.42 & 1.61  \\
\cline{1-8}
BWFLnet    &    13.65&    7.65&    6.45    &5.53 & 1.78 & 2.11 & 2.46  \\
\cline{1-8}
EXnet	 &     87.90&   82.10&   58.42&   55.31& 1.07 & 1.50 & 1.59  \\
\cline{1-8}
BWKWnet	 &23.83&  11.67&    8.07&   6.28& 2.04 & 2.95 & 3.79  \\
\cline{1-8}
NYnet	 &   549.81&  512.12&  370.80& 347.71& 1.07 & 1.48 & 1.58  \\
\cline{1-8}
\end{tabular}
\end{center}
\end{table*}
%%%%%%%
Table~\ref{table:cputimes} presents a comparison of  the null space algorithm as described in Algorithm~\ref{alg:null_full}, called NSM1 here, with its modified versions with our proposed partial update scheme only (NSM2), and one with both a partial update scheme and  head loss computations that start near convergence using the proposed heuristics in Section~\ref{subsec:stopcriteria}(NSM3). %As expected, in going from NSM1 to NSM2, the partial updates reduce the CPU times for head loss and matrix-matrix computations, incurring a bit more overhead in the process of indexing the partial update set~\eqref{eq:parupdateset} but with an overall CPU time reduction for the algorithm. 

The results of Table~\ref{table:cputimes} demonstrate the trends observed in Figure~\ref{fig:schurvsNullCPU_1}. The null space algorithms reduce average CPU time for all the given networks, the highest being by almost a factor of 4 for BWKnet. As expected from Algorithm~\ref{alg:null_full}, a null space method have the biggest impact in reducing computational cost when the network is not highly looped, i.e.\ $n_c<<n_p$. This is apparent from the results --  the least dense networks, BWFLnet and BWKnet, have the highest reduction in CPU time.  For the most meshed networks, EXnet and NYnet, the null space algorithm NSM1 have the smallest relative reduction in CPU time. 
From Table~\ref{table:Zproperties}, we note that the Newton  linear systems of the null space  method become bigger and less sparse the more meshed a network is. These result in less savings from the linear solve stage of the null space algorithm compared to for the sparser networks. 
Moreover, the networks with higher average degrees also have fundamental null bases with a  higher number of links involved in the loop equations;  the bigger size of the the update set $E_2$ becomes, the less are the savings gained when applying partial headloss computations. 
The trends for the null space algorithms NSM2 and NSM3  demonstrate  the additional relative savings made using our novel partial updates and the new heuristic to delay computing pressure head  values until the algorithm is near convergence, respectively. For all network models, significant additional savings are made by the novel approaches of NSM2 and NSM3 compared to the  null space method of Algorithm~\ref{alg:null_full} (NSM1). 

% !TEX root =  root.tex
%%%%%%%%%%%%%%%%%%%%%%%%%%%%%%%%%%%%%%%%%%%%%%%%%%%%%%%%%%%%%%%%%%%%%%%%%%%%%%%%
\section{Conclusion}\label{sec:conclusion}

In order  to facilitate the reliable and  efficient near real-time management of water distribution systems, we have analyzed the use of a null space inexact Newton method for demand driven hydraulic simulations of large scale water distribution networks. 
The saddle point structure  of the Jacobian in the Newton linear systems has been exploited to describe and propose novel sparse null space approaches, which  solve the  nonlinear hydraulic  equations  with less computational resources and more robustly than the equivalent Schur (or GGA) approach. Having described various methods for formulating and solving hydraulic equations, we have proposed techniques for increasing computational efficiency of a null space algorithm. We have presented a study of algorithms for generating null bases with respect to the sparsity,  condition number and  the fraction of total links involved in the null space loop equations. Using simulation results from an operational network model, we have shown the ubiquity of zero flows, and the inherent bad conditioning of the resulting linear systems for models with a range of scales in pipe diameters; a Jacobian regularization scheme from~\cite{elhay2011dealing} has also been adopted to improve the condition number of the linear systems.  Since the nulls pace projected linear systems have a Jacobian with a constant sparsity structure, symbolic factorization of the Cholesky solvers can be reused. 
We have demonstrated using our case study networks that, for the more meshed networks where the linear solve times take a large fraction of the Newton method CPU time, significant computational savings  can be made by reusing the symbolic Cholesky factors.

The repeated headloss computations for both Hazen-Williams and Darcy-Weisbach models take a significant fraction of total flops used by the Newton iterations. We show that only a fraction of the network graph edges are projected into the null space when appropriate  fundamental null space basis are used. Therefore, headlosses need only be computed for these links, reducing computational cost.  Moreover, many of the flow values for links involved in the loop equations converge well before the end of the Newton iterations. 
A partial update set, with size that diminishes with Newton iterations, is proposed as an index set so that headlosses are updated only for loop flows that have not yet converged;  this  has been shown to further reduce computational cost.  The parametrization of the update sets is studied to propose appropriate values.
A proof is given to guarantee the convergence of the inexact Newton method under partial updates.

Since the  flow iterates generated by the null space Newton method do not depend on pressure head values, the linear systems solved to compute pressure heads can be delayed until near convergence.
Based on the relative size of the partial update sets and relative norm of flow changes, we have proposed a  heuristic to avoid computing pressure head  values at each Newton iteration. This has been shown to reduce computational cost further. We have  used case studies with both synthetic network models from literature and large scale models of operational water distribution networks, of various sizes and meshedness,  to demonstrate the effectiveness  of our novel null space approaches. Results show that, for the sparsest of the example operational networks, CPU time for our efficient null space approach is reduced by nearly a factor of  4  compared to an efficient Schur method.

\section*{References}
%\bibliographystyle{elsarticle-harv} 
%%  \bibliography{<your bibdatabase>}
%\bibliography{mybibfile}
\bibliography{paper2_biblio}

\end{document}